\newif\if@restonecol
\newcommand\blthanks[1]{%
  \begingroup
  \renewcommand\thefootnote{}\thanks{#1}
  \endgroup
}
\DeclareTextFontCommand{\emph}{\it}
\DeclareTextFontCommand{\em}{\it}
\long\def\comment#1{}
\begin{document}
\conferenceinfo{SIGMOD'13,} {June 22--27, 2013, New York, New York, USA.}
\CopyrightYear{2013}
\crdata{978-1-4503-2037-5/13/06}
\clubpenalty=10000
\widowpenalty = 10000

\tolerance=1000


\title{Leveraging Transitive Relations for Crowdsourced Joins*\blthanks{Revised September 2014. This is a revised and corrected version of a paper that appeared in the ACM SIGMOD 2013 Conference. The original version contained a claim that the algorithm for ordering pairs of records for presentation to the crowd (Section 4.2) was optimal. A subsequent paper in VLDB 2014 by Vesdapunt et al.~\cite{DBLP:journals/pvldb/NorasesBD14} showed that the ordering problem is, in fact, NP-hard. Thus, in this version we have removed the claim of optimality and have updated the discussion and example in Section 4.2 accordingly.  A more detailed explanation of this can be found in~\cite{2014arXiv1408.6916W}. The copyright notice on the SIGMOD 2013 paper is as follows:}}



\newtheorem{theorem}{Theorem}
\newtheorem{example}{Example}
\newtheorem{definition}{Definition}
\newtheorem{proposition}{Proposition}
\newtheorem{lemma}{Lemma}
\newtheorem{corollary}{Corollary}
\newcommand{\crowdert}{$\textsf{CrowdEST}$\xspace}

\newcommand{\dataset}{data set\xspace}
\newcommand{\datasets}{data sets\xspace}
\newcommand{\rp}{record pair\xspace}
\newcommand{\rps}{record pairs\xspace}
\newcommand{\order}{\omega\xspace}
\newcommand{\orderlabel}{\omega\ell\xspace}
\newcommand{\labelset}{\mathcal{L}\xspace}
\newcommand{\publish}{\mathcal{P}\xspace}
\newcommand{\clustergraph}{\textsc{ClusterGraph}\xspace}
\newcommand{\clustergraphs}{\textsc{ClusterGraphs}\xspace}
\newcommand{\cluster}[1]{\textsf{cluster}(#1)\xspace}
\newcommand{\clusterp}[1]{\textsf{cluster'}(#1)\xspace}
\newcommand{\cn}{\mathcal{C}\xspace}
\newcommand{\ecn}{\mathrm{E}\xspace}
\newcommand{\nphard}{$NP$-$hard$\xspace}
\newcommand{\paper}{{\textsf{Paper}}\xspace}
\newcommand{\cora}{{\textsf{Cora}}\xspace}
\newcommand{\transitive}{{\textsf{Transitive}}\xspace}
\newcommand{\nontransitive}{{\textsf{Non-Transitive}}\xspace}
\newcommand{\optimalorder}{{\textsf{Optimal Order}}\xspace}
\newcommand{\parallelno}{{\textsf{Parallel}}\xspace}
\newcommand{\nonparallel}{{\textsf{Non-Parallel}}\xspace}
\newcommand{\parallelid}{{\textsf{Parallel(ID)}}\xspace}
\newcommand{\parallelnf}{{\textsf{Parallel(ID+NF)}}\xspace}
\newcommand{\expectoptimalorder}{{\textsf{Expect Order}}\xspace}
\newcommand{\randomorder}{{\textsf{Random Order}}\xspace}
\newcommand{\worstorder}{{\textsf{Worst Order}}\xspace}
\newcommand{\product}{{\textsf{Product}}\xspace}
\newcommand{\abtbuy}{{\textsf{Abt-Buy}}\xspace}
\newcommand{\tp}{{\textsf{tp}}\xspace}
\newcommand{\precision}{{\textsf{precison}}\xspace}
\newcommand{\recall}{{\textsf{recall}}\xspace}
\newcommand{\fp}{{\textsf{fp}}\xspace}
\newcommand{\fn}{{\textsf{fn}}\xspace}
\newcommand{\prob}{\mathrm{P}\xspace}
\newcommand{\addtext}[1]{{{\textcolor{blue}{ #1}}\xspace}}
\pagestyle{plain}

\numberofauthors{1}
\author{\alignauthor Jiannan Wang{$\,^{\#}$},~~ Guoliang Li{$\,^{\#}$},~~ Tim Kraska{$\,^\dag$},~~ Michael J. Franklin{$\,^\ddag$},~~ Jianhua Feng{$\,^{\#}$} \\
\vspace{.2em}\affaddr{$^{\#}$Department of Computer Science,  Tsinghua University, ~~ $^\dag$Brown University, ~~ $^\ddag$AMPLab, UC Berkeley} \\
\vspace{.1em}\email{wjn08@mails.tsinghua.edu.cn,~~ligl@tsinghua.edu.cn,~~tim\_kraska@brown.edu}\\
\email{franklin@cs.berkeley.edu,~~~~fengjh@tsinghua.edu.cn}
}

\maketitle
\thispagestyle{plain}


\begin{abstract}
The development of crowdsourced query processing systems has recently attracted a significant attention in the database community. A variety of crowdsourced queries have been investigated. In this paper, we focus on the crowdsourced join query which aims to utilize humans to find all pairs of matching objects from two collections. As a human-only solution is expensive, we adopt a hybrid human-machine approach which first uses machines to generate a candidate set of matching pairs, and then asks humans to label the pairs in the candidate set as either matching or non-matching. Given the candidate pairs, existing approaches will publish all pairs for verification to a crowdsourcing platform. However, they neglect the fact that the pairs satisfy transitive relations. As an example, if $o_1$ matches with $o_2$, and $o_2$ matches with $o_3$, then we can deduce that $o_1$ matches with $o_3$ without needing to crowdsource $(o_1, o_3)$. To this end, we study how to leverage transitive relations for crowdsourced joins. We present a hybrid transitive-relations and crowdsourcing labeling framework which aims to crowdsource the minimum number of pairs to label all the candidate pairs. We propose a heuristic labeling order and devise a parallel labeling algorithm to efficiently crowdsource the pairs following the order. We evaluate our approaches in both simulated environment and a real crowdsourcing platform. Experimental results show that our approaches with transitive relations can save much more money and time than existing methods, with a little loss in the result quality.
\end{abstract}

\vspace{1mm}
\noindent
\sloppy
{\bf Categories and Subject Descriptors:} H.2.4 {[Database Management]}: Systems---\emph{Query processing};

\vspace{1mm}
\noindent
{\bf Keywords:} Crowdsourcing, Join Operator, Transitive Relations, Entity Resolution
\fussy

\section{Introduction}
\label{sec:intro}

\sloppy

The development of crowdsourced query processing systems has recently attracted a significant attention in the database community~\cite{conf/sigmod/FranklinKKRX11,journals/pvldb/MarcusWKMM11,ilprints1015}. A variety of crowdsourced queries have been investigated, such as crowdsourced \texttt{MAX} query~\cite{conf/sigmod/GuoPG12,conf/www/VenetisGHP12}, crowdsourced \texttt{SELECT} query~\cite{conf/icde/BethTMP13,ilprints1049}, and crowdsourced \texttt{JOIN} query~\cite{journals/pvldb/MarcusWKMM11,journals/pvldb/WangKFF12,conf/www/DemartiniDC12,ilprints1047}. In this paper, we focus on the crowdsourced join query for entity resolution which aims to identify all pairs of \emph{matching} objects between two collections of objects, where humans are utilized to decide whether a pair of objects is matching, i.e. referring to the same real-world entity. The crowdsourced join query can help to solve many real problems that are hard for computers. For example, given two collections of product records from two online retailers, there may be different records that refer to the same product, e.g. ``\textsf{iPad~2nd~Gen}" and ``\textsf{iPad~Two}". In order to integrate them, we can perform a crowdsourced join query to find all pairs of matching products.

\fussy

A human-only implementation of crowdsourced joins is to ask humans to label every pair of objects from the two collections as either matching or non-matching~\cite{journals/pvldb/MarcusWKMM11}. Since the human-only solution is wasteful, prior works~\cite{journals/pvldb/WangKFF12,conf/www/DemartiniDC12,ilprints1047} showed how to build hybrid human-machine approaches. In our paper, we adopt a hybrid approach which first uses machines to generate a candidate set of matching pairs, and only then asks humans to label the pairs in the candidate set. Given the candidate pairs, existing hybrid solutions will publish all of them to a crowdsourcing platform, e.g., Amazon Mechanical Turk (AMT). However, existing hybrid solutions neglect the fact that the pairs satisfy transitive relations. By applying transitive relations, we can deduce some pairs' labels without asking humans to label them, thus reducing the number of crowdsourced pairs. For example, if $o_1$ and $o_2$ are matching, and $o_2$ and $o_3$ are matching, we do not need to crowdsource the label for $(o_1, o_3)$ since they can be deduced as matching based on transitive relations.


Based on this idea, in our paper, we study the problem of combining transitive relations and crowdsourcing to label the candidate pairs generated by machines. We formulate this problem, and propose a hybrid labeling framework which aims to crowdsource the minimum number of pairs for labeling all the candidate pairs. We find that the labeling order, i.e. which pairs should be labeled first, has a significant effect on the total number of crowdsourced pairs. We prove that labeling first all matching pairs, and then the other non-matching pairs leads to the optimal labeling order. However, the optimal order requires to know the real matching pairs upfront which cannot be achieved in reality. Therefore, we propose a heuristic labeling order which labels the pairs in the decreasing order of the likelihood that they are a matching pair. The likelihood could be given by some machine-learning methods~\cite{journals/pvldb/WangKFF12}. In order to label the pairs in this order, one simple way is to label them from the first pair to the last pair one by one. However, this method prohibits workers from doing tasks in parallel and leads to a long completion time. To address this problem, we devise a parallel labeling algorithm which can identify the pairs that must need to be crowdsourced, and ask crowd workers to label them in parallel. To summarize, we make the following contributions in the paper:

\sloppy

\vspace{-.25em}
\begin{itemize}
  \item We formulate the problem of utilizing transitive relations to label the candidate pairs in crowdsourcing, and propose a hybrid transitive-relations and crowdsourcing labeling framework to address this problem.\vspace{-.5em}
  \item We find the labeling order has a significant effect on the number of crowdsourced pairs, and respectively propose an optimal labeling order and a heuristic labeling order. 
  \vspace{-.5em}
  \item We devise a parallel labeling algorithm to reduce the labeling time, and propose two optimization techniques to further enhance the performance.\vspace{-.5em}
  \item We present our evaluations using both simulation and AMT. The experimental results show that our approaches with transitive relations can save much more money and time than existing methods, with a little loss in the result quality.
\end{itemize}
\vspace{-.25em}

\fussy

{\noindent \bf Organization.} We formulate our problem in Section~\ref{sec:problem formulation} and propose a hybrid labeling framewo1rk in Section~\ref{sec:hybrid-workflow}. Section~\ref{sec:sorting} discusses the optimal and expected optimal labeling orders. We devise a parallel labeling algorithm and two optimization techniques in Section~\ref{sec:parallel}. Experimental study is presented in Section~\ref{sec:exp}. We cover related work in Section~\ref{sec:related-work}, and present our conclusion and future work in Section~\ref{sec:conclusion}.


\section{Problem Formulation}\label{sec:problem formulation}


In this section, we first introduce crowdsourcing (Section~\ref{subsec:crowdsourcing}) and then define transitive relations (Section~\ref{subsec:transitive-relations}). Finally, we formulate our problem of utilizing transitive relations to label a set of pairs in crowdsourcing (Section~\ref{subsec:problem-description}).

\subsection{Crowdsourcing}\label{subsec:crowdsourcing}

There are many crowdsourcing platforms, such as AMT and MobileWorks, which provide APIs for easily calling large numbers of workers to complete micro-tasks (called Human Intelligent Tasks (HITs)). To label whether two objects in a pair are identical through crowdsourcing, we create an HIT for the pair, and publish it to a crowdsourcing platform. Figure~\ref{fig:hit-example} shows an example HIT for a pair (``\textsf{iPad 2}", ``\textsf{iPad two}"). In the HIT, the workers are required to submit ``YES" if they think ``\textsf{iPad 2}" and ``\textsf{iPad two}" are the same or submit ``NO" if they think ``\textsf{iPad 2}" and ``\textsf{iPad two}" are different. After the workers have completed the HIT, we obtain the crowdsourced label of the pair.

\begin{figure}[tbp]
\centering
  \includegraphics[scale=0.35]{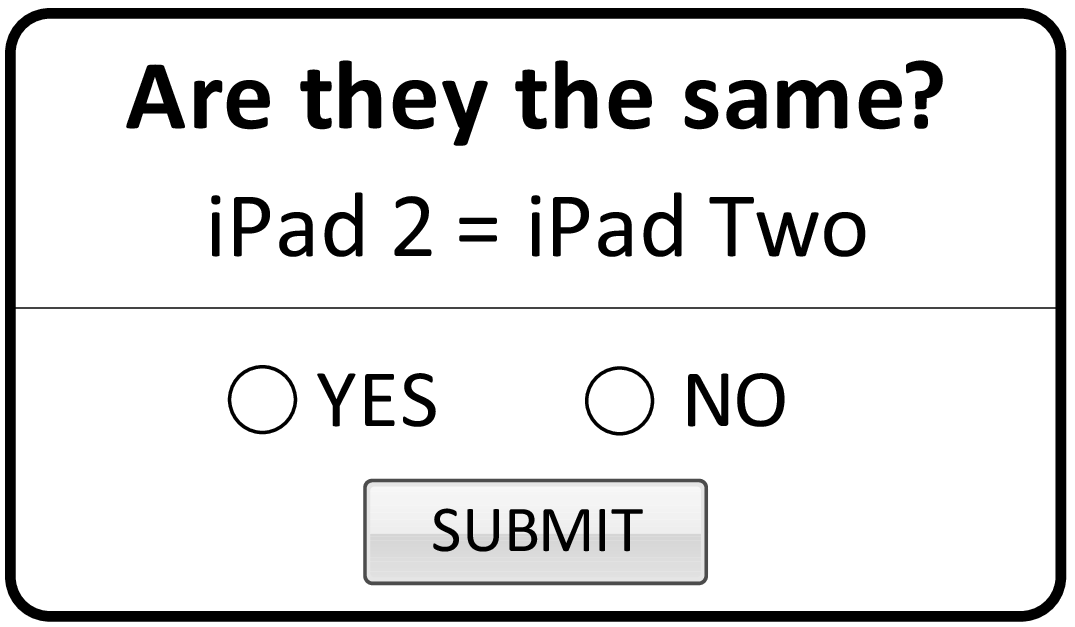}\\ \vspace{-0.5em}
  \caption{An example HIT for an object pair.}\label{fig:hit-example} \vspace{-1em}
\end{figure}

{\noindent \bf Assumption:} Of course, workers might return wrong results and ambiguities in the question might exist.
Techniques to address this problem were, for example, proposed in~\cite{Ipeirotis2010,conf/colt/DekelS09,wais2010towards, journals/pvldb/LiuLOSWZ12}. However, this problem can be treated as an orthogonal issue, as shown by~\cite{journals/pvldb/ParameswaranSGPW11,ilprints1047}, and we assume only correct answers for the remainder of the paper.



\subsection{Transitive Relations}\label{subsec:transitive-relations}

In this section, we discuss how to utilize transitive relations to deduce the label of a pair. We use the following notations. Let $p=(o, o')$ denote an object pair. The label of $p=(o, o')$  could be either ``matching" or ``non-matching", which respectively means that $o$ and $o'$ refer to the same real-world entity, and $o$ and $o'$ refer to different real-world entities. If $o$ and $o'$ are matching (non-matching), they are denoted by $o=o'$ ($o \neq o'$).

There are two types of transitive relations.

\vspace{.5em}

{\noindent {\bf Positive Transitive Relation:} Given three objects, $o_1$, $o_2$ and $o_3$, if $o_1=o_2$, and $o_2=o_3$, then we have $o_1=o_3$. For example, consider the three objects ``\textsf{iPad~2nd~Gen}", ``\textsf{iPad~Two}" and ``\textsf{iPad~2}". As ``\textsf{iPad~2nd~Gen}" and ``\textsf{iPad~Two}" are matching, and ``\textsf{iPad~Two}" and ``\textsf{iPad~2}" are matching, then ``\textsf{iPad~2nd~Gen}" and ``\textsf{iPad~2}" can be deduced as a matching pair based on positive transitive relation.

\vspace{.5em}

{\noindent {\bf Negative Transitive Relation:} Given three objects, $o_1$, $o_2$ and $o_3$, if $o_1=o_2$, and $o_2\neq o_3$, then we have $o_1\neq o_3$. For example, consider the three objects ``\textsf{iPad~Two}", ``\textsf{iPad~2}" and ``\textsf{iPad~3}". As ``\textsf{iPad~Two}" and ``\textsf{iPad~2}" are matching, and ``\textsf{iPad~2}" and ``\textsf{iPad~3}" are non-matching, then ``\textsf{iPad~Two}" and ``\textsf{iPad~3}" can be deduced as a non-matching pair based on negative transitive relation.

\vspace{.5em}

By applying positive and negative transitive relations to $n$ objects, we have the following lemma.
\begin{lemma}\label{lem:transitivity}
Given a set of objects, $o_1, o_2, \cdots, o_n$, (1) if $o_i = o_{i+1}$ $(1 \leq i < n)$, then we have $o_1 = o_{n}$; (2) if $o_i = o_{i+1}$ $(1 \leq i < n, i \neq k)$, and $o_k \neq o_{k+1}$, then we have $o_1 \neq o_{n}$.
\end{lemma}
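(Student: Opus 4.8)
The plan is to prove both parts by induction on the chain length $n$, using the two atomic transitive relations (Positive and Negative) established earlier as the base inductive step.

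\textbf{Part (1).} For the purely positive chain, I would induct on $n$. The base case $n=2$ is trivial (the hypothesis $o_1 = o_2$ is the conclusion), and $n=3$ is precisely the Positive Transitive Relation. For the inductive step, assume the claim holds for chains of length $n-1$. Given $o_i = o_{i+1}$ for all $1 \leq i < n$, the inductive hypothesis applied to $o_1, \ldots, o_{n-1}$ yields $o_1 = o_{n-1}$. Combining this with $o_{n-1} = o_n$ via one application of the Positive Transitive Relation gives $o_1 = o_n$, completing the induction.

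\textbf{Part (2).} Here exactly one adjacent pair is non-matching, namely $o_k \neq o_{k+1}$, while all other adjacent pairs match. The key structural observation is that the chain splits into two all-matching segments, $o_1, \ldots, o_k$ and $o_{k+1}, \ldots, o_n$, joined by the single non-matching link. By Part (1) applied to each segment, I get $o_1 = o_k$ and $o_{k+1} = o_n$. Now I would chain these facts with the single negative link: from $o_1 = o_k$ and $o_k \neq o_{k+1}$, the Negative Transitive Relation gives $o_1 \neq o_{k+1}$; then from $o_1 \neq o_{k+1}$ and $o_{k+1} = o_n$, a second application of the Negative Transitive Relation (treating $o_{k+1} = o_n$ as the matching premise and symmetrizing the non-match) yields $o_1 \neq o_n$, which is the desired conclusion.

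\textbf{Main obstacle.} The step I expect to require the most care is handling the symmetry of the relations in Part (2), since the Negative Transitive Relation as stated takes the form ``$o_1 = o_2$ and $o_2 \neq o_3$ implies $o_1 \neq o_3$,'' with the matching pair appearing first. To apply it in the direction needed for the second segment, I must justify that both the matching relation ($=$) and the non-matching relation ($\neq$) are symmetric, so that $o_1 \neq o_{k+1}$ together with $o_{k+1} = o_n$ can be rearranged into the required premise pattern. This symmetry is intuitively clear from the meaning of ``referring to the same (or different) real-world entity,'' and I would state it as an explicit auxiliary remark so that the two applications of the Negative Transitive Relation are formally licensed. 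Once symmetry is granted, both parts reduce to routine induction.
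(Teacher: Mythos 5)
Your proof is correct and matches the paper's intent exactly: the paper offers no explicit proof, stating only that the lemma follows ``by applying positive and negative transitive relations to $n$ objects,'' and your induction for Part~(1) plus the segment-splitting argument for Part~(2) is precisely the formalization of that remark. The symmetry of $=$ and $\neq$ that you flag is a legitimate detail to make explicit, and your handling of it (together with the degenerate segments when $k=1$ or $k+1=n$, which Part~(1) covers trivially) completes the argument.
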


Given a set of labeled pairs, to check if a new pair $(o, o')$ can be deduced from them, we build a graph for the labeled pairs for ease of presentation. In the graph, each vertex represents an object, and each edge denotes a labeled pair. (For simplicity, an object and a pair of objects are respectively mentioned interchangeably with its corresponding vertex and edge in later text.) From Lemma~\ref{lem:transitivity}, we can easily deduce the following conditions:
\begin{enumerate}
  \item If there exists a path from $o$ to $o'$ which only consists of matching pairs, then $(o, o')$ can be deduced as a matching pair;
  \item If there exists a path from $o$ to $o'$ which contains a single non-matching pair, then $(o, o')$ can be deduced as a non-matching pair;
  \item If any path from $o$ to $o'$ contains more than one non-matching pair, $(o, o')$ cannot be deduced.
\end{enumerate}

\begin{example}\label{exa:trans}
Consider seven labeled pairs: three matching pairs $(o_1, o_2), (o_3, o_4), (o_4, o_5)$ and four non-matching pairs $(o_1, o_6), (o_2, o_3), (o_3, o_7), (o_5, o_6)$. To check whether the unlabeled pairs $(o_3, o_5), (o_5, o_7), (o_1,o_7)$ can be deduced from them, we first build a graph as shown in Figure~\ref{fig:deduce-graph}.

For the unlabeled pair $(o_3, o_5)$, there is a path $o_3\!\rightarrow\!o_4\!\rightarrow\!o_5$ from $o_3$ to $o_5$ which only consists of matching pairs, i.e. $o_3 = o_4$, $o_4 = o_5$, thus $(o_3, o_5)$ can be deduced as a matching pair.

For the unlabeled pair $(o_5, o_7)$, there is a path $o_5\!\rightarrow\!o_4\!\rightarrow\!o_3\!\rightarrow\!o_7$ from $o_5$ to $o_7$ which contains a single non-matching pair, i.e. $o_5 = o_4$, $o_4 = o_3$, $o_3 \neq o_7$, thus $(o_5, o_7)$ can be deduced as a non-matching pair.

For the unlabeled pair $(o_1, o_7)$, there are two paths $o_1\!\rightarrow\!o_2\!\rightarrow\!o_3\!\rightarrow\!o_7$ and $o_1\!\rightarrow\!o_6\!\rightarrow\!o_5\!\rightarrow\!o_4\rightarrow\!o_3\rightarrow\!o_7$ from $o_1$ to $o_7$. As both of them contain more than one non-matching pair, $(o_1, o_7)$ cannot be deduced.
\end{example}

\begin{figure}[htbp]\vspace{-1em}
\centering
  \includegraphics[scale=0.35]{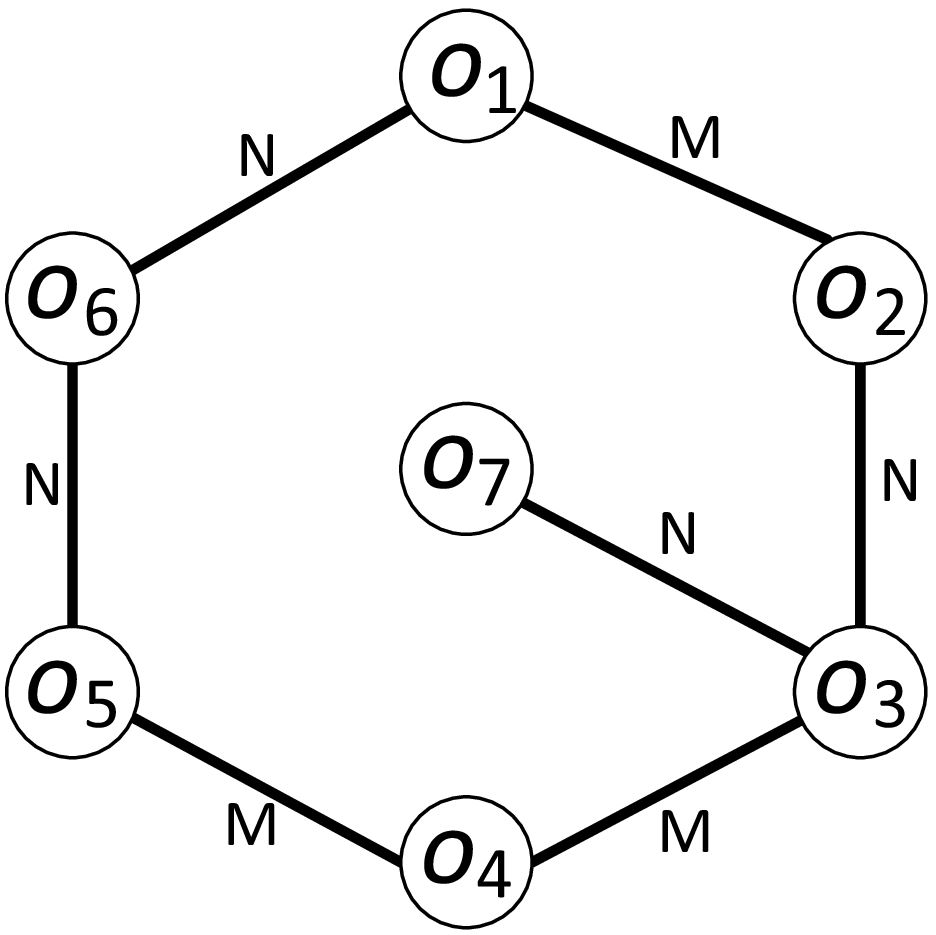}\\\vspace{-.5em}
  \caption{Graph illustrating Example~\ref{exa:trans} (``M" denotes ``matching", and ``N" denotes ``non-matching").}\label{fig:deduce-graph}\vspace{-1em}
\end{figure}

\subsection{Problem Description}\label{subsec:problem-description}
To process crowdsourced joins, we first use machine-based techniques to generate a candidate set of matching pairs. This has already been studied by previous work~\cite{journals/pvldb/WangKFF12}. The goal of our work is to study how to label the candidate pairs. Since in our setting, some pairs will be labeled by crowd workers, and others will be deduced using transitive relations.~We call the former \emph{crowdsourced (labeled) pairs}, and the latter \emph{deduced (labeled) pairs}. Typically, on a crowdsourcing platform, we need to pay for crowdsourced pairs, thus there is a financial incentive to minimize the number of crowdsourced pairs. Based on this idea, we define our problem as below.

\begin{definition}
Given a set of pairs that need to be labeled, our goal is to crowdsource the minimum number of pairs such that for the other pairs, their labels can be deduced from the crowdsourced pairs based on transitive relations.
\end{definition}

\begin{figure}[htbp]
\centering
  \includegraphics[scale=0.32]{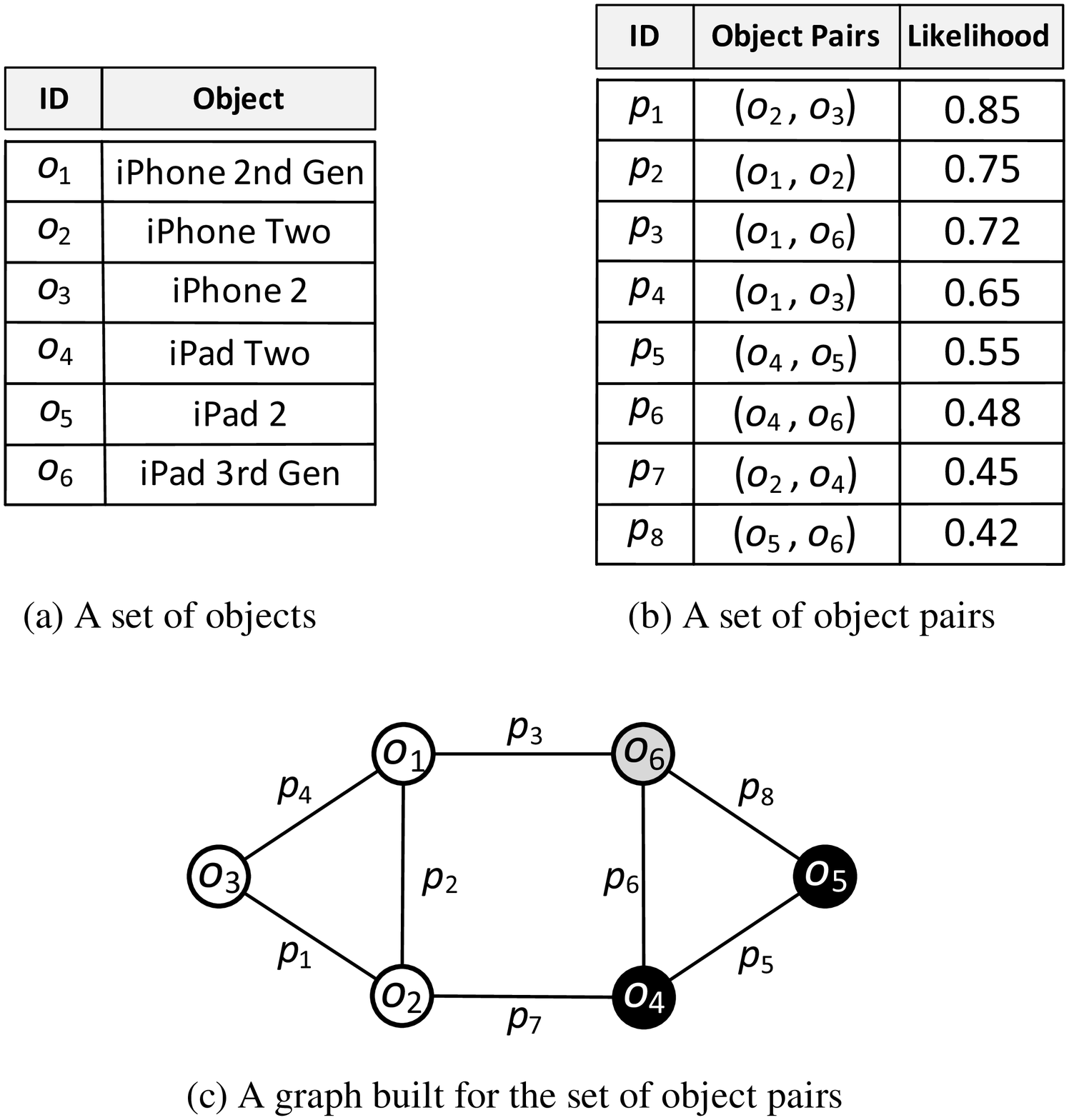}\\ 
  \caption{A running example.}\label{fig:running-example}\vspace{-.5em}
\end{figure}

\begin{example}\label{exa:problem}
Figure~\ref{fig:running-example} shows eight pairs, i.e., $p_1, p_2\cdots, p_8$, generated by machine-based methods for labeling (Please ignore the Likelihood column for now). We build a graph for these pairs, where the vertices with the same grey level represent the matching objects. One possible way to label them is to crowdsource seven pairs $p_1, p_2, p_3, p_5, p_6, p_7, p_8$. For the other pair $p_4$, as shown in the graph, it can be deduced from $p_1$ and $p_2$ based on transitive relations. A better way to label them only needs to crowdsource six pairs $p_1, p_2, p_3, p_5, p_7, p_8$. For the other pairs, as shown in the graph, $p_4$ can be deduced from $p_1$ and $p_2$, and $p_6$ can be deduced from $p_5$ and $p_8$. It is not possible to further reduce the amount of crowdsourced pairs. Thus, six is the optimal amount.
\end{example}

\section{Labeling Framework}
\label{sec:hybrid-workflow}

We propose a hybrid transitive-relations and crowdsourcing labeling framework in this section. Our framework takes as input a set of unlabeled pairs generated by machine-based techniques, and identifies these pairs' labels either through crowdsourcing or by using transitive relations. As shown in Figure~\ref{fig:workflow}, our framework mainly consists of two components, \texttt{Sorting} and \texttt{Labeling}. Their details will be described in Sections~\ref{subsec:sorting} and~\ref{subsec:labeling}, respectively.


\begin{figure}[htbp]
\centering
  \includegraphics[scale=0.3]{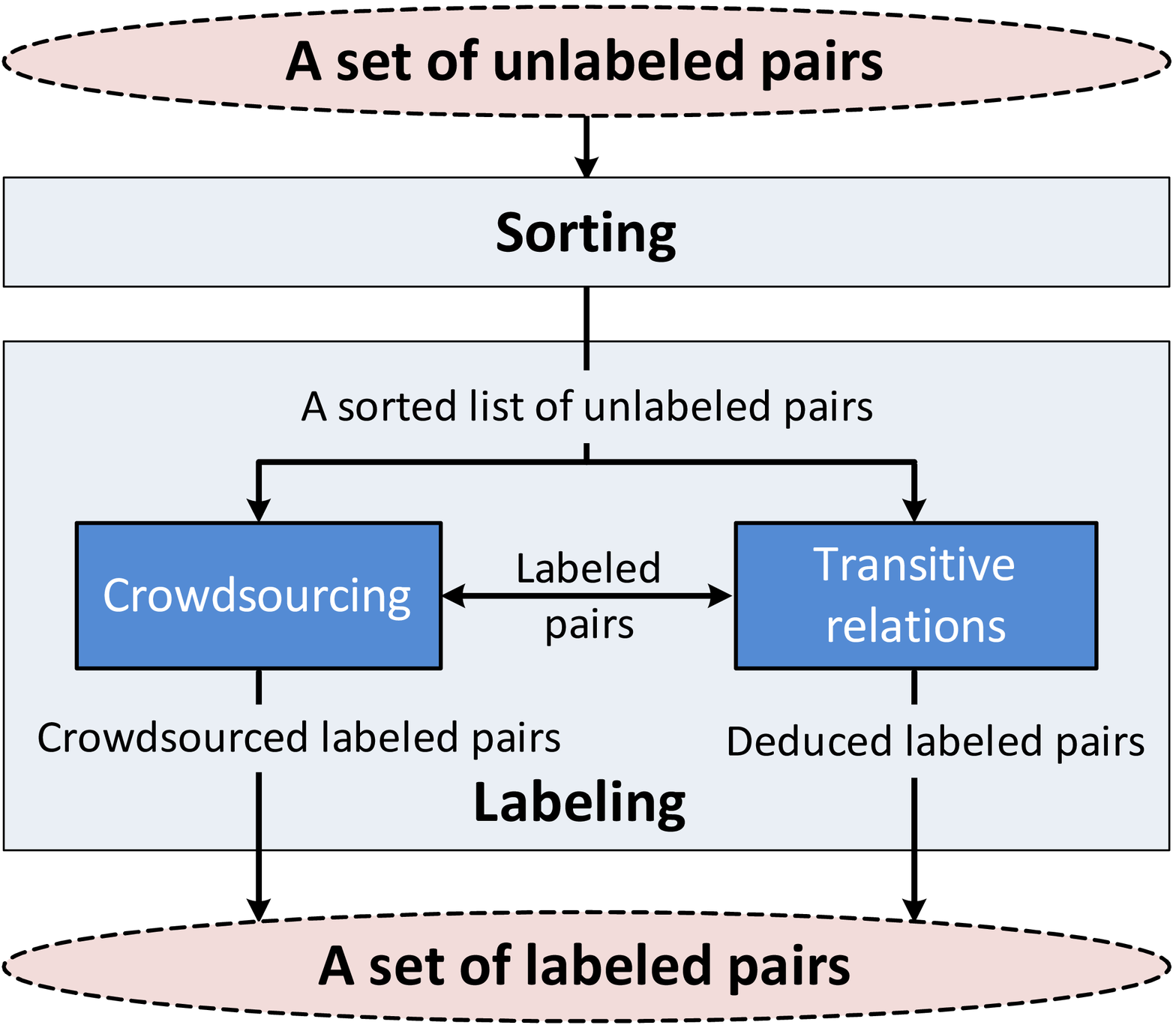}\\\vspace{-.5em}
  \caption{Hybrid transitive-relations and crowdsourcing labeling framework.}\label{fig:workflow} 
\end{figure}

\subsection{Sorting Component}\label{subsec:sorting}

Given a set of unlabeled pairs, we have an interesting finding that the labeling order of the pairs will affect the number of crowdsourced pairs. A labeling order can be taken as a sorted list of pairs, denoted by $\order=\langle p_1, p_2, \cdots, p_n \rangle$, where $p_i$ $(2\leq i \leq n)$ will be labeled after $p_1, p_2, \cdots, p_{i-1}$. For example, suppose we need to label three pairs, $(o_1, o_2)$, $(o_2, o_3)$, $(o_1, o_3)$, where $o_1=o_2$ and $o_2 \neq o_3$ and $o_1 \neq o_3$. If the labeling order is $\order = \big\langle (o_1, o_2), (o_2, o_3), (o_1, o_3)\big\rangle$, after labeling the first two pairs through crowdsourcing, we obtain $o_1=o_2$ and $o_2 \neq o_3$. For the third pair, we can deduce $o_1\neq o_3$ from $o_1=o_2$ and $o_2 \neq o_3$ based on transitive relations, thus $\order$ requires crowdsourcing \emph{two} pairs. However, if we use a different~labeling order $\order' = \big\langle (o_2, o_3), (o_1, o_3), (o_1, o_2)\big\rangle$, after labeling the first two pairs through crowdsourcing, we obtain $o_2 \neq o_3$ and $o_1\neq o_3$. We are unable to deduce $o_1 = o_2$ from $o_2 \neq o_3$ and $o_1\neq o_3$ based on transitive relations, thus $\order'$ requires crowdsourcing \emph{three} pairs which is more than that required by $\order$.

Based on this observation, in our framework, the sorting component attempts to identify the optimal labeling order to minimize the number of crowdsourced pairs. Thus, it takes as input a set of unlabeled pairs and outputs a sorted list of unlabeled pairs. The details of identifying the optimal labeling order will be presented in Section~\ref{sec:sorting}.

\subsection{Labeling Component}\label{subsec:labeling}



Given a sorted list of unlabeled pairs, the labeling component labels the pairs in the sorted order. In this section, we present a very simple, one-pair-at-a-time, labeling algorithm to achieve this goal. Consider a sorted list of pairs $\order = \langle p_1, p_2, \cdots, p_n\rangle$. The algorithm will start with labeling from the first pair, and then label each pair one by one. When labeling the $i$-th pair $p_i$, if its label cannot be deduced from the already labeled pairs (i.e., $\{p_1, p_2, \cdots, p_{i-1}\}$) based on transitive relations, we publish $p_i$ to a crowdsourcing platform and obtain its crowdsourced label; otherwise, we deduce its label from $p_1, p_2, \cdots, p_{i-1}$, and output the deduced label. After obtaining the label of $p_i$, we begin to process the next pair $p_{i+1}$, and use the same method to get its label. The algorithm stops until all the pairs are labeled.


\begin{figure}[tup]\vspace{-1em}
\begin{algorithm}[H]

\linesnumbered \SetVline

\small
\caption{\textsc{DeduceLabel}($p$, $\labelset$)}

\SetLine

\KwIn{$p = (o, o')$~: an object pair; \hspace{0.3em}$\labelset$: a set of labeled pairs}
 \KwOut{$\ell$: the deduced label}

\SetVline

\Begin{
    Build a $\clustergraph$ for $\labelset$;\nllabel{algo:deduce-label:buildgraph}\\
    Let $\cluster{o}$ and $\cluster{o'}$ denote the cluster of objects $o$ and $o'$ respectively\;
    \If {\emph{$\cluster{o} = \cluster{o'}$}}{\nllabel{algo:deduce-label:matching-begin}
         $\ell$ =  ``matching";\nllabel{algo:deduce-label:matching-end}\\
    }
    \Else{
        \If {\emph{there is an edge between $\cluster{o}$ and $\cluster{o'}$}}{\nllabel{algo:deduce-label:nonmatching-begin}
            $\ell$ =  ``non-matching";\nllabel{algo:deduce-label:nonmatching-end}\\
        }
        \Else{\nllabel{algo:deduce-label:undeduced-begin}
             $\ell$ =  ``undeduced";\nllabel{algo:deduce-label:undeduced-end}\\
        }
    }
    \textbf{return} $\ell$\;
}
\end{algorithm} \vspace{-1.5em}
\caption{\textsf{DeduceLabel} algorithm. }\label{algo:deduce-label}\vspace{-1em}
\end{figure}

Next, we discuss how to check whether an unlabeled pair $p = (o, o')$ can be deduced from a set of labeled pairs based on transitive relations. As mentioned in Section~\ref{subsec:transitive-relations}, we can build a graph for the labeled pairs, and check the graph whether there is a path from $o$ to $o'$ which contains no more than one non-matching pair. If there exists such a path, $p$ can be deduced from the labeled pairs; otherwise, $p$ cannot be deduced. One naive solution to do this checking is enumerating every path from $o$ to $o'$, and counting the number of non-matching pairs in each path. However, the number of enumerated paths may increase exponentially with the number of vertices in the graph, thus we propose an efficient graph-clustering-based method to solve this problem.


When enumerating each path, we find that only non-matching pairs in the path can affect the checking result. In other words, the matching pairs have no effect on the checking result. This observation inspires us to merge the matching objects into the same cluster, and then, for each pair of non-matching objects, we add an edge between their corresponding clusters. We call the new graph a \clustergraph. By using the \clustergraph, we can efficiently check whether an unlabeled pair $p=(o, o')$ can be deduced from the already labeled pairs. Figure~\ref{algo:deduce-label} shows the pseudo-code of the algorithm. Given a set of labeled pairs $\labelset$, we first build a \clustergraph for $\labelset$ using Union-Find algorithm~\cite{journals/jacm/Tarjan75} (Line~\ref{algo:deduce-label:buildgraph}). Then for the unlabeled pair $p=(o, o')$,

(1) If $o$ and $o'$ are in the same cluster, then there is a path from $o$ to $o'$ which only consists of matching pairs. Thus, $p$ can be deduced as a matching pair (Lines~\ref{algo:deduce-label:matching-begin}-\ref{algo:deduce-label:matching-end});

(2) If $o$ and $o'$ are in two different clusters,

(2.1) If there is an edge between the two clusters, there exists a path from $o$ to $o'$ with a single non-matching pair. Thus, $p$ can be deduced as a non-matching pair (Lines~\ref{algo:deduce-label:nonmatching-begin}-\ref{algo:deduce-label:nonmatching-end});

(2.2) If there is no edge between the two clusters, there does not exist a path from $o$ to $o'$ with no more than one non-matching pair. Thus, $p$ cannot be deduced (Lines~\ref{algo:deduce-label:undeduced-begin}-\ref{algo:deduce-label:undeduced-end}).

\begin{figure}[tbp]\vspace{-1em}
\centering
  \includegraphics[scale=0.35]{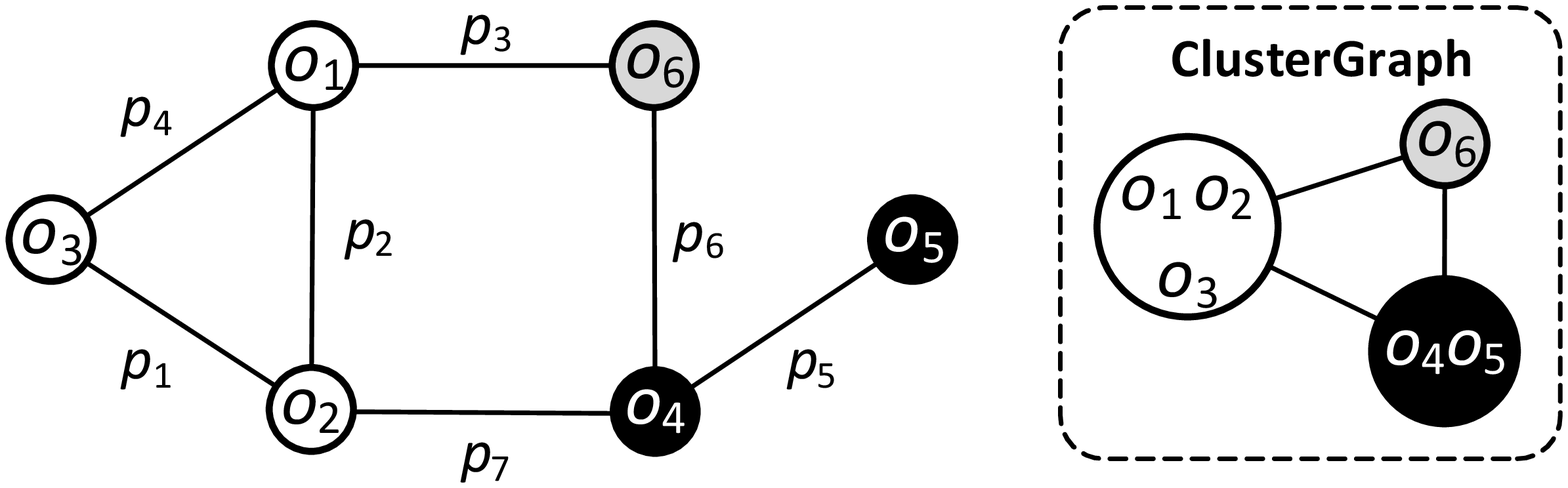}\\\vspace{-1em}
  \caption{A {$\clustergraph$} built for the first seven labeled pairs $\{p_1, p_2, \cdots, p_7\}$ in Figure~\ref{fig:running-example}.}\label{fig:cluster-graph}\vspace{-.5em}
\end{figure}

\begin{example}
Suppose we have already labeled seven pairs $\{ p_1, p_2, \cdots, p_7\}$ in Figure~\ref{fig:cluster-graph}. To check whether $p_8 = (o_5, o_6)$ can be deduced from them, we first build a $\clustergraph$ as follows. Since $o_1, o_2, o_3$ are matching, we merge them into one cluster. As $o_4$ and $o_5$ are matching, we merge them into another cluster. Since $o_6$ does not match with any other object, we take itself as one cluster. There are three non-matching pairs, i.e., $p_3=(o_1, o_6)$, $p_6=(o_4, o_6)$ and $p_7=(o_2, o_4)$. We respectively add three non-matching edges between $\cluster{o_1}$ and $\cluster{o_6}$, $\cluster{o_4}$ and $\cluster{o_6}$, $\cluster{o_2}$ and $\cluster{o_4}$.

Consider the unlabeled pair $p_8=(o_5, o_6)$. In the \clustergraph, since $o_5$ and $o_6$ are in different clusters, and $\cluster{o_5}$ and $\cluster{o_6}$ have an edge, then there must exist a path from $o_5$ to $o_6$ which contains a single non-matching pair (e.g., $o_5\!\rightarrow\!o_4\!\rightarrow\!o_6$), thus $p_8=(o_5, o_6)$ can be deduced as a non-matching pair from $\{p_1, p_2, \cdots, p_7\}$.
\end{example}


The labeling algorithm needs to enumerate each pair one by one and it can only publish a single pair to the crowdsourcing platform. Hence, every time there is only one available HIT in the crowdsourcing platform. This constraint makes the workers unable to do HIT simultaneously and results in long latency. Notice that batching strategies~\cite{journals/pvldb/MarcusWKMM11,journals/pvldb/WangKFF12}, which place multiple pairs into a single HIT, have been proved useful in reducing the money cost. However, the simple approach is unable to support the batching techniques since only one pair is allowed to publish every time, and thus results in more money cost. To overcome these drawbacks, we propose a parallel labeling algorithm in Section~\ref{sec:parallel}, which can crowdsource multiple pairs every time without increasing the total number of crowdsourced pairs.

\section{Sorting}\label{sec:sorting}

As observed in Section~\ref{subsec:sorting}, different labeling orders result in different numbers of crowdsourced pairs. In this section, we explore the optimal labeling order to minimize the number of crowdsourced pairs. We begin with the formulation of this problem and discuss how to find the optimal labeling order in Section~\ref{subsec:optimal-order}. However, the optimal order cannot be achieved in reality. Thus, in Section~\ref{subsec:expect-optimal-order}, we propose a heuristic labeling order. 

\subsection{Optimal Labeling Order}\label{subsec:optimal-order}

Given a labeling order $\order=\langle p_1, p_2, \cdots, p_n \rangle$, let $\cn(\order)$ denote the number of crowdsourced pairs required by $\order$. Our goal is to identify the optimal labeling order which results in the minimum number of crowdsourced pairs. The formal definition of this problem is as follows.

\begin{definition}[Optimal Labeling Order]
Given a set of object pairs, the problem of identifying the optimal labeling order is to get a sorted list of pairs, $\order_{op}$, such that the number of crowdsourced pairs is minimal using the order, i.e. $\cn(\order_{op}) \leq \cn(\order)$ holds for any other order $\order$.
\end{definition}

For example, assume three object pairs, $p_1=(o_1, o_2)$, $p_2=(o_2, o_3)$, $p_3=(o_1, o_3)$, where $p_1$ is a matching pair, and $p_2, p_3$ are two non-matching pairs. We can label them in six different labeling orders, $\order_1= \langle p_1, p_2, p_3\rangle$, $\order_2= \langle p_1, p_3, p_2\rangle$, $\order_3=\langle p_2, p_3, p_1\rangle$, $\order_4= \langle p_2, p_1, p_3\rangle$, $\order_5= \langle p_3, p_1, p_2\rangle$, and $\order_6=\langle p_3, p_2, p_1\rangle$. The numbers of crowdsourced pairs of the six orders are respectively $\cn(\order_1)=2$, $\cn(\order_2)=2$, $\cn(\order_3)=3$, $\cn(\order_4)=2$, $\cn(\order_5)=2$, and $\cn(\order_6)=3$. As $\order_1$, $\order_2$, $\order_4$, $\order_5$ lead to the minimum number of crowdsourced pairs, any one of them can be seen as the optimal labeling order.

Notice that any labeling order can be changed to another labeling order by swapping adjacent pairs. We first study how swapping two adjacent pairs affects the number of crowdsourced pairs. We have an observation that it is always \emph{better} to first label a matching pair and then a non-matching pair, i.e., it will lead to fewer or equal number of crowdsourced pairs. Recall $\order_3=\langle p_2, p_3, p_1\rangle$ and $\order_4=\langle p_2, p_1, p_3\rangle$ in the above example. $\order_3$ labels a non-matching pair $p_3$ before a matching pair $p_1$ while $\order_4$ swaps the positions of $p_1$ and $p_3$, and labels a matching pair $p_1$ before a non-matching pair $p_3$. As $\cn(\order_3)=3$ and $\cn(\order_4)=2$, the example shows that $\order_4$, which first labels a matching pair, needs to crowdsource fewer pairs than $\order_3$. Lemma~\ref{lem:matching-better} formulates this idea.


\begin{lemma}\label{lem:matching-better}
Consider two labeling orders,
\[\order=\langle p_1, \cdots, p_{i-1},  \bm{p_i}, \bm{p_{i+1}}, p_{i+2}, \cdots p_n \rangle, \]
\[\order'=\langle p_1, \cdots, p_{i-1}, \bm{p_{i+1}}, \bm{p_{i}}, p_{i+2}, \cdots p_n \rangle,\]
where $\order'$ is obtained by swapping $p_{i}$ and $p_{i+1}$ in $\order$. If $p_{i}$ is a non-matching pair, and $p_{i+1}$ is a matching pair, then we have $\cn(\order')\leq \cn(\order)$.
\end{lemma}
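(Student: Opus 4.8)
The plan is to localize the entire effect of the swap to just the two pairs $p_i$ and $p_{i+1}$, and then reduce the comparison of crowdsourcing costs to a single monotonicity fact about deduction. First I would observe that $\order$ and $\order'$ agree on the prefix $p_1, \ldots, p_{i-1}$ and on the suffix $p_{i+2}, \ldots, p_n$, and that in both orders the first $i+1$ positions process exactly the same set of pairs, namely $\{p_1, \ldots, p_{i+1}\}$. Because we assume only correct answers (Section~\ref{subsec:crowdsourcing}), every deduced label coincides with the label the crowd would return, so the accumulated knowledge after processing any set of pairs -- equivalently, the \clustergraph\ together with the labels it can deduce -- depends only on that set and not on the order of processing or on whether each pair was crowdsourced or deduced. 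Hence after position $i+1$ both orders reach the identical state, so the number of crowdsourced pairs among $p_1, \ldots, p_{i-1}$ and among $p_{i+2}, \ldots, p_n$ is the same in both. Consequently $\cn(\order') - \cn(\order)$ equals the difference of the crowdsourcing costs incurred by $p_i$ and $p_{i+1}$ alone.

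Writing $\labelset$ for the labeled set after the common prefix $p_1, \ldots, p_{i-1}$, and letting $\mathrm{cost}(p \mid S) \in \{0,1\}$ indicate whether $p$ must be crowdsourced given $S$ (i.e. $1$ iff $p$ is not deducible from $S$), the desired inequality $\cn(\order') \le \cn(\order)$ becomes
\[ \mathrm{cost}(p_{i+1} \mid \labelset) + \mathrm{cost}(p_i \mid \labelset \cup \{p_{i+1}\}) \;\le\; \mathrm{cost}(p_i \mid \labelset) + \mathrm{cost}(p_{i+1} \mid \labelset \cup \{p_i\}). \]
The crux is two structural observations about the \clustergraph\ drawn from the \textsc{DeduceLabel} semantics. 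Since $p_i$ is a non-matching pair, adding it only inserts a non-matching edge and never merges clusters; therefore the deducibility of the matching pair $p_{i+1}$ -- which depends only on whether its two endpoints lie in the same cluster -- is unchanged by the presence of $p_i$. This gives $\mathrm{cost}(p_{i+1} \mid \labelset \cup \{p_i\}) = \mathrm{cost}(p_{i+1} \mid \labelset)$, so the second term on the right equals the first term on the left and they cancel. What remains is $\mathrm{cost}(p_i \mid \labelset \cup \{p_{i+1}\}) \le \mathrm{cost}(p_i \mid \labelset)$, which is exactly monotonicity of deduction: enlarging the labeled set can only enlarge the set of deducible pairs, so if $p_i$ were already deducible from $\labelset$ it remains deducible after adding $p_{i+1}$, and the cost cannot go up.

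The step I expect to require the most care -- the main obstacle -- is justifying the order-independence of the accumulated state asserted in the first paragraph. It rests essentially on the correctness assumption and on the consistency of the ground-truth labels (a genuine non-matching pair always connects two distinct clusters, so inserting it never contradicts an existing matching path), and it is precisely what licenses treating ``the knowledge after a set of pairs'' as a well-defined deduction closure rather than something path-dependent. Once that is established, the rest is just the two-term cancellation followed by monotonicity, both immediate from the \clustergraph\ interpretation of matching and non-matching edges. I would finish by remarking that the cancellation identity $\mathrm{cost}(p_{i+1}\mid \labelset \cup \{p_i\}) = \mathrm{cost}(p_{i+1}\mid \labelset)$ may hold trivially (for instance when $p_{i+1}$ is already deducible), but since each term lies in $\{0,1\}$ the inequality is valid in every case, which yields $\cn(\order') \le \cn(\order)$.
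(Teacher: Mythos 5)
Your proof is correct, but it is organized differently from the paper's. Both arguments share the same first step: since every processed pair carries its true label whether it was crowdsourced or deduced, the deducibility of any $p_j$ depends only on the \emph{set} of pairs preceding it, so the swap can only affect the statuses of $p_i$ and $p_{i+1}$ themselves. From there the paper runs an exhaustive four-case analysis on whether $p_i$ and $p_{i+1}$ are deduced or crowdsourced in $\order$, handling each combination with a separate argument; you instead express the cost difference as a single two-term inequality and dispatch it with two structural facts: (i) the cross-term equality --- adding the non-matching $p_i$ inserts an edge between clusters of the \clustergraph but never merges clusters, so it cannot change whether the truly matching $p_{i+1}$ is deducible --- and (ii) monotonicity of deducibility under enlarging the labeled set. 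Fact (i) is essentially the insight of the paper's Case 4 (removing a non-matching edge preserves an all-matching path), but by promoting it to an equality you cancel the cross terms and avoid the case analysis entirely, including the paper's Case 3, which needs a separate closure argument that a deduced pair is deducible from the crowdsourced pairs alone. What your route buys is brevity and a clean isolation of exactly the two facts needed; what it costs is a heavier reliance on ground-truth consistency, which you correctly flag as the delicate point: fact (i) silently rules out $p_{i+1}$ becoming deducible as \emph{non-matching} through a path that uses the new edge $p_i$, and that is impossible only because the true labels respect the transitive relations of Lemma~\ref{lem:transitivity}. The paper's longer case analysis has the side benefit that its Cases 1 and 3 are reused verbatim in the proof of Lemma~\ref{lem:matching-nochange}, whereas your cancellation argument is specific to the non-matching/matching swap of this lemma.
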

\begin{proof}
For a pair $p_j$, if $j\not\in\{i, i+1\}$, it is easy to see that $\{p_1$, $p_2$, $\cdots$, $p_{j-1}\}$ in $\order$ is the same as that in $\order'$. Therefore, if $p_j$ is a crowdsourced pair in $\order$, it must be a crowdsourced pair in $\order'$, and vice versa. Hence, we only need to check whether $p_i, p_{i+1}$ are crowdsourced pairs in $\order$ and $\order'$.

There are four possible cases for $p_i,p_{i+1}$ in $\order$: deduced and crowdsourced pairs, crowdsourced and crowdsourced pairs, deduced and deduced pairs, or crowdsourced and deduced pairs. We prove that in any case, $\{p_i,p_{i+1}\}$ in $\order'$ would contain fewer or equal number of crowdsourced pairs than $\{p_i,p_{i+1}\}$ in $\order$, thus $\cn(\order')\leq \cn(\order)$ holds.

\vspace{.5em}

{\noindent {\bf Case 1:} \emph{$p_i$ is deduced and $p_{i+1}$ is crowdsourced.}} As $p_i$ in $\order$ can be deduced from $\{p_1, p_2, \cdots,p_{i-1}\}$, in the graph built for them, there is a path from one object of $p_i$ to the other with no more than one non-matching pairs. Such path still exists in the graph built for more pairs $\{p_1, p_2, \cdots,p_{i-1}, p_{i+1}\}$, thus $p_{i}$ in $\order'$ is also a deduced pair. Similarly, as $p_{i+1}$ in $\order$ cannot be deduced from $\{p_1, p_2, \cdots,p_{i}\}$, it cannot be deduced from fewer pairs $\{p_1, p_2, \cdots,p_{i-1}\}$, thus $p_{i+1}$ in $\order'$ is also a crowdsourced pair. Therefore, $\cn(\order')\leq \cn(\order)$ holds in Case~1.

\vspace{.25em}

{\noindent {\bf Case 2:} \emph{$p_i$ is crowdsourced and $p_{i+1}$ is crowdsourced.}} In the worst case, both $p_i$ and $p_{i+1}$ in $\order'$ still need to be crowdsourced, so $\cn(\order')\leq \cn(\order)$ holds in Case~2.

\vspace{.25em}

\sloppy

{\noindent {\bf Case 3:} \emph{$p_i$ is deduced and $p_{i+1}$ is deduced.}} As $p_i$ in $\order$ can be deduced from $\{p_1, p_2, \cdots,p_{i-1}\}$, it can also be deduced from more pairs $\{p_1, p_2, \cdots,p_{i-1}, p_{i+1}\}$, thus $p_{i}$ in $\order'$ is also a deduced pair. As $p_{i+1}$ in $\order$ can be deduced from $\{p_1, p_2, \cdots, p_i\}$, it is easy to derive that $p_{i+1}$ can also be deduced from the \emph{crowdsourced} pairs in $\{p_1, p_2, \cdots, p_i\}$. Since $p_i$ is not a \emph{crowdsourced} pair based on the given condition, $p_{i+1}$ can be deduced from the \emph{crowdsourced} pairs in $\{p_1, p_2, \cdots, p_{i-1}\}$, thus $p_{i+1}$ in $\order'$ is a deduced pair. As $p_i$ and $p_{i+1}$ in $\order'$ are both deduced pairs, $\cn(\order')\leq \cn(\order)$ holds in Case~3.

\fussy

\vspace{.25em}

{\noindent {\bf Case 4:} \emph{$p_i$ is crowdsourced and $p_{i+1}$ is deduced.}} As the matching pair $p_{i+1}$ in $\order$ can be deduced from $\{p_1, p_2, \cdots,p_{i}\}$, in the graph built for $\{p_1, p_2, \cdots,p_{i}\}$, there exists a path from one object of $p_{i+1}$ to the other which only consists of matching pairs. As $p_{i}$ is a non-matching pair, after removing $p_{i}$ from the graph, the path still exists in the graph built for $\{p_1, p_2, \cdots,p_{i-1}\}$, thus $p_{i+1}$ can be deduced from $\{p_1, p_2, \cdots,p_{i-1}\}$. Hence, $p_{i+1}$ in $\order'$ is also a deduced pair. In the worst case, $p_{i}$ in $\order'$ still need to be crowdsourced, so $\cn(\order')\leq \cn(\order)$ holds in Case~4.
\end{proof}

For a given labeling order, by using the above method to swap adjacent object pairs, we can put all the matching pairs prior to the non-matching pairs. As it is better to label a matching pair first and then a non-matching pair, the new order will require fewer or equal number of crowdsourced pairs than the original labeling order.

Next we prove that swapping adjacent matching pairs or non-matching pairs will not change the number of crowdsourced pairs. 


\begin{lemma}\label{lem:matching-nochange}
Consider two labeling orders,
\[\order=\langle p_1, \cdots, p_{i-1},  \bm{p_i}, \bm{p_{i+1}}, p_{i+2}, \cdots p_n \rangle, \]
\[\order'=\langle p_1, \cdots, p_{i-1}, \bm{p_{i+1}}, \bm{p_{i}}, p_{i+2}, \cdots p_n \rangle,\]
where $\order'$ is obtained by swapping $p_{i}$ and $p_{i+1}$ in $\order$. If $p_{i}$ and $p_{i+1}$ are both matching pairs or both non-matching pairs, then we have $\cn(\order') = \cn(\order)$.
\end{lemma}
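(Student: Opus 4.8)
The plan is to mirror the proof of Lemma~\ref{lem:matching-better} and reduce the statement $\cn(\order')=\cn(\order)$ to a comparison of the two swapped pairs. First I would observe that for every $p_j$ with $j\notin\{i,i+1\}$ the set of pairs labeled before $p_j$ is \emph{identical} under \order and \order': for $j<i$ it equals $\{p_1,\dots,p_{j-1}\}$ in both, while for $j\ge i+2$ it equals $\{p_1,\dots,p_{i-1},p_i,p_{i+1},\dots,p_{j-1}\}$ in both, only the internal order of $p_i,p_{i+1}$ having changed. Since whether $p_j$ can be deduced depends, through the graph of Section~\ref{subsec:transitive-relations}, only on this \emph{set} of labeled pairs and not on the order in which they were produced, $p_j$ is crowdsourced in \order iff it is crowdsourced in \order'. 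Hence it suffices to show that the number of crowdsourced pairs among $\{p_i,p_{i+1}\}$ is the same under the two orders.

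Write $S=\{p_1,\dots,p_{i-1}\}$ for the common prefix, and say a pair is \emph{deducible from} a set $T$ if the graph of $T$ has a path between its endpoints with at most one non-matching edge. Two elementary facts drive the analysis: deducibility is monotone in $T$, and adding to $T$ a pair that is already deducible from $T$ does not enlarge the family of deducible pairs. In \order the two pairs contribute $[\,p_i\text{ not deducible from }S\,]+[\,p_{i+1}\text{ not deducible from }S\cup\{p_i\}\,]$ crowdsourced pairs, and in \order' they contribute the symmetric expression with $i$ and $i+1$ exchanged. If at least one of $p_i,p_{i+1}$ is deducible from $S$, the two facts above let me collapse $S\cup\{p_i\}$ (resp.\ $S\cup\{p_{i+1}\}$) to $S$ for deduction purposes, and a short case check shows that \emph{both} orders crowdsource exactly those of $p_i,p_{i+1}$ that are not deducible from $S$; the counts therefore agree. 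Notably, the same-type hypothesis is not needed in these cases.

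The only remaining case --- and the main obstacle --- is when neither $p_i$ nor $p_{i+1}$ is deducible from $S$. Then both orders crowdsource at least one of the two pairs, and the counts coincide precisely when the following \emph{exchange property} holds: $p_{i+1}$ is deducible from $S\cup\{p_i\}$ iff $p_i$ is deducible from $S\cup\{p_{i+1}\}$. I would establish this by path surgery. Suppose $p_{i+1}=(c,d)$ is deducible from $S\cup\{p_i\}$ with $p_i=(a,b)$; since $p_{i+1}$ is \emph{not} deducible from $S$, any witnessing path must traverse the newly added edge $p_i$, so it decomposes as $c\!\rightarrow\!\cdots\!\rightarrow\!a,\ (a,b),\ b\!\rightarrow\!\cdots\!\rightarrow\!d$ with the two end-segments lying in $S$. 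Replacing the edge $p_i$ by the edge $p_{i+1}$ produces the path $a\!\rightarrow\!\cdots\!\rightarrow\!c,\ (c,d),\ d\!\rightarrow\!\cdots\!\rightarrow\!b$ in $S\cup\{p_{i+1}\}$ between the endpoints of $p_i$. Because $p_i$ and $p_{i+1}$ are of the \emph{same type}, the substituted edge carries the same number of non-matching edges as the one it replaces, so the new path still has at most one non-matching edge and witnesses deducibility of $p_i$ from $S\cup\{p_{i+1}\}$; the converse is identical by symmetry. This substitution step is exactly where the assumption that $p_i,p_{i+1}$ are both matching or both non-matching is indispensable --- it is precisely what fails in Lemma~\ref{lem:matching-better} --- and arguing cleanly that a witnessing path must use the freshly added edge is the crux of the whole proof.
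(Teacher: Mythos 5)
Your proof is correct, and it reaches the paper's conclusion by a noticeably cleaner decomposition than the one the paper uses. The paper splits into four cases according to whether $p_i$ and $p_{i+1}$ are crowdsourced or deduced in $\order$ (its Cases 1 and 3 are inherited from Lemma~\ref{lem:matching-better}; its Case 2 is a proof by contradiction; its Case 4 splits further into three sub-cases, one of which is shown to be impossible), and it performs the edge-substitution ("path surgery") argument twice, once in Case 2 and once in Case 4(c). You instead parameterize everything by deducibility from the common prefix $S=\{p_1,\dots,p_{i-1}\}$: your two monotonicity facts (deducibility is monotone, and adjoining an already-deducible pair does not enlarge the deducible family) dispatch in one stroke all situations where at least one of $p_i,p_{i+1}$ is deducible from $S$ --- these correspond to the paper's Cases 1, 3, 4(a) and the "impossible" Case 4(b), which in your framework needs no contradiction argument at all --- and you isolate the genuine content of the lemma as a single symmetric exchange property, proved once by the same path surgery the paper uses, with the same-type hypothesis entering exactly where it enters in the paper (substituting the edge $p_{i+1}$ for $p_i$ preserves the count of non-matching edges on the witnessing path). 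The trade-off: the paper's version is more elementary and self-contained case checking, while yours exposes the structure (and makes explicit that the same-type hypothesis is needed only for the exchange step); your second monotonicity fact does silently rely on the paper's standing assumption that all labels are correct and mutually consistent, since replacing a deduced edge by its witnessing path is label-preserving only then, but the paper's own proof rests on the same assumption, so this is not a gap.
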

\begin{proof}
Since swapping $p_i$ and $p_{i+1}$ will not affect the other pairs except $p_i$ and $p_{i+1}$ (see the proof in Lemma~\ref{lem:matching-better}), we only need to prove that $\{p_i, p_{i+1}\}$ in $\order'$ requires the same number of crowdsourced pairs as $\{p_i, p_{i+1}\}$ in $\order$, thus $\cn(\order')=\cn(\order)$ holds. We still consider the four cases.

\vspace{.5em}

{\noindent {\bf Case 1:} \emph{$p_i$ is deduced and $p_{i+1}$ is crowdsourced.}} As in the proof of Case 1 in Lemma~\ref{lem:matching-better}, we have $p_{i}$ in $\order'$ is also a deduced pair, and $p_{i+1}$ in $\order'$ is also a crowdsourced pair. Therefore, $\cn(\order')= \cn(\order)$ holds in Case~1.

\vspace{.25em}

\sloppy

{\noindent {\bf Case 2:} \emph{$p_i$ is crowdsourced and $p_{i+1}$ is crowdsourced.}} As $p_{i+1}$ in $\order$ cannot be deduced from $\{p_1, p_2, \cdots,p_{i}\}$, it cannot be deduced from fewer pairs $\{p_1, p_2, \cdots,p_{i-1}\}$, thus $p_{i+1}$ in $\order'$ is also a crowdsourced pair.

Next, we prove by contradiction $p_i$ in $\order'$ is also a crowdsourced pair. Assume $p_i$ in $\order'$ is not a crowdsourced pair. Then $p_i$ can be deduced from $\{p_1, p_2, \cdots, p_{i-1}, p_{i+1}\}$. In the graph built for $\{p_1, p_2, \cdots, p_{i-1}, p_{i+1}\}$, there exists a path from one object of $p_{i}$ to the other which contains no more than one non-matching pair. And since $p_i$ in $\order$ is crowdsourced, $p_i$ cannot be deduced from $\{p_1, p_2, \cdots, p_{i-1}\}$, thus $p_{i+1}$ must be in the path. By removing $p_{i+1}$ from the path and adding $p_{i}$ to the path, we obtain a new path from one object of $p_{i+1}$ to the other. Since $p_{i+1}$ is removed and $p_{i}$ is added, the path must be in the graph built for $\{p_1, p_2, \cdots, p_{i}\}$. As $p_{i}$, $p_{i+1}$ are either both matching pairs or both non-matching pairs, the path contains no more than one non-matching pair. Therefore, $p_{i+1}$ can be deduced from $\{p_1, p_2, \cdots, p_{i}\}$ which contradicts $p_{i+1}$ in $\order$ is a crowdsourced pair. Hence, the assumption does not hold, and $p_i$ in $\order'$ is a crowdsourced pair. Since $p{_i},p_{i+1}$ in $\order'$ are both crowdsourced pairs, $\cn(\order')= \cn(\order)$ holds in Case~2.

\fussy

\vspace{.25em}

{\noindent {\bf Case 3:} \emph{$p_i$ is deduced and $p_{i+1}$ is deduced.}} As in the proof of Case 3 in Lemma~\ref{lem:matching-better}, we have $p_{i}$ in $\order'$ is also a deduced pair, and $p_{i+1}$ in $\order'$ is also a deduced pair. Therefore, $\cn(\order') = \cn(\order)$ holds in Case~3.

\vspace{.25em}

{\noindent {\bf Case 4:} \emph{$p_i$ is crowdsourced and $p_{i+1}$ is deduced.}} As $p_{i+1}$ in $\order$ can be deduced from $\{ p_1, p_2, \cdots, p_i\}$, in the graph built for $\{p_1, p_2, \cdots, p_{i}\}$, there exist some paths from one object of $p_{i+1}$ to the other which contains no more than one non-matching pair. There are three cases about these paths:

(a) If none of these paths contains $p_i$, that is, $p_i$ and $p_{i+1}$ will not affect each other, then $p_i$ in $\order'$ is also a crowdsourced pair and $p_{i+1}$ in $\order'$ is also a deduced pair. Therefore, $\cn(\order')= \cn(\order)$ holds in Case~4(a).

(b) If some of these paths contain $p_i$ but others do not, then we can infer that in the graph built for $\{p_1, p_2, \cdots, p_{i-1}\}$, there exists a path from one object of $p_i$ to the other with no more than one non-matching pair, thus $p_i$ can be deduced from $\{p_1, p_2, \cdots, p_{i-1}\}$ which contradicts $p_i$ in $\order$ is a crowdsourced pair. Hence, Case~4(b) is impossible.

(c) If all of the paths contain $p_i$, after removing $p_i$ from these paths, in the graph built for $\{p_1, p_2, \cdots, p_{i-1}\}$, there will be no path from one object of $p_{i+1}$ to the other which contains no more than one non-matching pair, thus $p_{i+1}$ cannot be deduced from $\{p_1, p_2, \cdots, p_{i-1}\}$. That is, $p_{i+1}$ in $\order'$ is a crowdsourced pair. Next we prove that $p_i$ in $\order'$ is a deduced pair. Consider one of these paths that contain $p_i$. After removing $p_i$ from the path and adding $p_{i+1}$ to the path, we obtain a new path from one object of $p_{i}$ to the other. Since $p_{i}$ is removed and $p_{i+1}$ is added, the path must be in the graph built for $\{p_1, p_2, \cdots, p_{i-1}, p_{i+1}\}$. As $p_{i}$, $p_{i+1}$ are either both matching pairs or both non-matching pairs, the path contains no more than one non-matching pair. Therefore, $p_{i}$ can be deduced from $\{p_1, p_2, \cdots, p_{i-1}, p_{i+1}\}$, thus it is a deduced pair in $\order'$. As $p_i$ in $\order'$ is a deduced pair, and $p_{i+1}$ in $\order'$ is a crowdsourced pair, $\cn(\order')= \cn(\order)$ holds in Case~4(c).
\end{proof}

For two different labeling orders, if they both first label all the matching pairs and then label the other non-matching pairs, we can change one labeling order to the other by swapping adjacent matching pairs and adjacent non-matching pairs. Based on Lemma~\ref{lem:matching-nochange}, the two labeling orders require the same number of crowdsourced pairs. Therefore, any labeling order, which puts all the matching pairs to the front of the other non-matching pairs, is the optimal.


\begin{theorem}\label{thm:optimal-order}
Given a set of object pairs, the optimal labeling order is to first label all the matching pairs, and then label the other non-matching pairs.
\end{theorem}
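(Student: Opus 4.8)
The plan is to combine the two preceding lemmas through an adjacent-transposition (bubble-sort) argument. Fix an arbitrary labeling order $\order$, and let $\order^*$ denote any order that places all matching pairs before all non-matching pairs. The goal is to establish two facts: that $\cn(\order^*) \le \cn(\order)$, and that every such matching-first order has the same value of $\cn$. Together these say that the matching-first orders are exactly the optimal ones, which is the statement of the theorem.

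First I would show that \emph{any} order can be transformed into a matching-first order without increasing $\cn$. Call an adjacent pair of positions $(i,i+1)$ an inversion if $p_i$ is non-matching and $p_{i+1}$ is matching. If $\order$ is not already matching-first, then somewhere a matching pair occurs after a non-matching pair, and scanning through the list forces at least one such \emph{adjacent} inversion to exist. Swapping that inversion so that the matching pair comes first yields, by Lemma~\ref{lem:matching-better}, a new order $\order'$ with $\cn(\order') \le \cn(\order)$; moreover this swap strictly decreases the total number of non-matching-before-matching inversions by one. Since the number of inversions is a nonnegative integer that strictly decreases at each step, the process terminates after finitely many swaps in an order with no inversions at all, i.e.\ a matching-first order $\order^*$, and $\cn$ never increased along the way. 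Hence $\cn(\order^*)\le\cn(\order)$.

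Second, I would show that all matching-first orders carry the same cost. Any two matching-first orders differ only in the internal arrangement within the matching block and within the non-matching block, so one can be reached from the other by a sequence of adjacent swaps, each of which exchanges either two matching pairs or two non-matching pairs. By Lemma~\ref{lem:matching-nochange}, every such swap leaves $\cn$ unchanged, so all matching-first orders yield identical counts. Combining the two parts, for any order $\order$ and any matching-first order $\order^*$ we obtain $\cn(\order^*)\le\cn(\order)$, which is precisely the definition of optimality.

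The main obstacle is the termination and invariant bookkeeping of the sorting step: I must ensure that restricting attention to swaps of exactly the form required by Lemma~\ref{lem:matching-better} (a non-matching pair immediately followed by a matching pair) still suffices to reach a fully matching-first order, and that the procedure cannot loop forever. The inversion-count monotonicity argument above resolves both concerns, since it is the standard bubble-sort correctness invariant. The one thing to be careful about is to invoke Lemma~\ref{lem:matching-better} \emph{only} on adjacent transpositions of the correct type and Lemma~\ref{lem:matching-nochange} \emph{only} on same-type adjacent transpositions, so that each applied swap is justified by exactly one of the two lemmas and the two are never conflated.
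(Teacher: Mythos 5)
Your proposal is correct and follows essentially the same route as the paper: Lemma~\ref{lem:matching-better} is used to bubble every matching pair ahead of the non-matching pairs without increasing $\cn$, and Lemma~\ref{lem:matching-nochange} is used to show all matching-first orders share the same cost, hence any of them is optimal. The only difference is that you make the termination argument (the decreasing inversion count) explicit, which the paper leaves implicit.
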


For example, consider the pairs in Figure~\ref{fig:running-example}. $\order_{op}$ = $\langle p_1$, $p_2$, $p_4$, $p_5$, $p_3$, $p_6$, $p_7$, $p_8 \rangle$ is the optimal labeling order since all the matching pairs, i.e., $p_1, p_2, p_4, p_5$, are labeled before the other non-matching pairs, i.e., $p_3, p_6, p_7, p_8$.

Now we have proved that the optimal labeling order is to first label all the matching pairs, and then label the other non-matching pairs. However, when identifying the labeling order, we have no idea about whether a pair is matching or non-matching, therefore, the optimal labeling order cannot be achieved in reality. To address this problem, we investigate an expected optimal labeling order in the next section.

\subsection{Expected Optimal Labeling Order}\label{subsec:expect-optimal-order}

In this section, we aim to identify a labeling order that requires as few crowdsourced pairs as possible. Recall the optimal labeling order which first labels the matching pairs and then labels the non-matching pairs. Although we do not know the real matching pairs upfront, machine-based methods can be applied to compute for each pair the likelihood that they are matching. For example, the likelihood can be the similarity computed by a given similarity function~\cite{journals/pvldb/WangKFF12}.

Consider a labeling order $\order = \langle p_1, p_2, \cdots, p_n \rangle$. Suppose each pair in $\order$ is assigned with a probability that they are matching. Then the number of crowdsourced pairs required by $\order$ becomes a random variable. Its expected value is computed as the sum of the probability that $p_i$ is a crowdsourced pair $(1 \leq i \leq n)$, i.e.,
\[ \ecn\big[\cn(\order)\big] = \sum_{i=1}^{n} \mathbb{P} (p_i = \textsf{crowdsourced}). \]
To compute $\mathbb{P}(p_i = \textsf{crowdsourced})$, we enumerate the possible labels of $\{p_1, p_2, \cdots, p_{n}\}$, and for each possibility, since the labels of $\{p_1, p_2, \cdots, p_{i-1}\}$ are known, we can derive whether $p_i$ is a crowdsourced pair or not. Hence, $\mathbb{P}(p_i = \textsf{crowdsourced})$ is the sum of the probability of each possibility that $p_i$ is a crowdsourced pair.


We aim to identify a labeling order that can minimize the expected number of crowdsourced pairs since the order is expected to require the minimum number of crowdsourced pairs. We call such an order an \emph{expected optimal labeling order}. The following definition formulates this problem.

\sloppy

\begin{definition}[Expected Optimal Labeling Order]
Given a set of object pairs, and each object pair is assigned with a probability that they are matching, the problem of identifying the expected optimal labeling order $\order_{eop}$ is to compute a sorted list of pairs such that the expected number of crowdsourced pairs is minimal using the order, i.e. $\ecn\big[\cn(\order_{eop})\big]\leq \ecn\big[\cn(\order)\big]$ holds for any other order $\order$.
\end{definition}

\begin{example}\label{exa:expect}
Consider three pairs, $p_1=(o_1, o_2)$, $p_2=(o_2, o_3)$ and $p_3=(o_1, o_3)$. Suppose the probabilities that $p_1$, $p_2$ and $p_3$ are matching pairs are respectively 0.9, 0.5 and 0.1. There are six different labeling orders, $\order_1= \langle p_1, p_2, p_3\rangle$, $\order_2= \langle p_1, p_3, p_2\rangle$, $\order_3= \langle p_2, p_3, p_1\rangle$, $\order_4=\langle p_2, p_1, p_3\rangle$, $\order_5= \langle p_3, p_1, p_2\rangle$, and $\order_6=\langle p_3, p_2, p_1\rangle$. We first compute the excepted number of crowdsourced pairs for $\order_1= \langle p_1, p_2, p_3\rangle$. For the first pair $p_1$, as there are no labeled pairs, it must need crowdsourcing, thus $\mathbb{P} (p_1\!=\! \textsf{crowdsourced})=1$. For the second pair $p_2$, as $p_2 = (o_2, o_3)$ cannot be deduced from $p_1 = (o_1, o_2)$, it must need crowdsourcing, thus $\mathbb{P} (p_2 = \textsf{crowdsourced}) = 1$. 
For the third pair $p_3$, we enumerate the possible labels of $\{p_1, p_2, p_3\}$, i.e., \{\textrm{matching, matching, matching}\},  \{\textrm{non-matching, matching, non-matching}\}, \{\textrm{matching, non-matching, non-matching}\}, \{\textrm{non-matching, non-matching, matching}\}, \{\textrm{non-matching, non-matching, non-matching}\}. Among the five possibilities, $p_3$ needs to be crowdsourced only when both $p_1$ and $p_2$ are non-matching pairs (i.e., the last two possibilities). Hence, the probability that $p_3$ is a crowdsourced pair is $\frac{0.1*0.5*0.1+0.1*0.5*0.9}{0.9*0.5*0.1+0.1*0.5*0.9+0.9*0.5*0.9+0.1*0.5*0.1+0.1*0.5*0.9} = 0.09$.
By summing up the probabilities that $p_1, p_2, p_3$ are crowdsourced pairs, we have $\ecn\big[\cn(\order_1)\big] = 1+1+0.09 = 2.09$. Similarly, we can compute $\ecn\big[\cn(\order_2)\big]=2.17$, $\ecn\big[\cn(\order_3)\big]=2.83$, $\ecn\big[\cn(\order_4)\big]=2.09$, $\ecn\big[\cn(\order_5)\big]=2.17$, and $\ecn\big[\cn(\order_6)\big]=2.83$. As $\order_1$ and $\order_4$ require the minimum expected number of crowdsourced pairs, either one of them can be taken as the expected optimal labeling order.
\end{example}

A recent VLDB paper has proved that the problem of identifying the expected optimal labeling order is NP-hard~\cite{DBLP:journals/pvldb/NorasesBD14}. In our paper, we propose a heuristic method to solve this problem. Recall the analysis of Section~\ref{subsec:optimal-order}, we have proved that it is better to label a matching pair before a non-matching pair (Lemma~\ref{lem:matching-better}). This idea inspires us to label the object pairs in the decreasing order of the likelihood that they are matching. For example, consider the unlabeled pairs $p_1, p_2, \cdots, p_8$ in Figure~\ref{fig:running-example}. To identify their labeling order, we first use a machine-based method to compute a likelihood for each pair that it is a matching pair, and then label the pairs in the decreasing order of the likelihood, i.e., $\order_{eop}$ = $\langle p_1$, $p_2$, $p_3$, $p_4$, $p_5$, $p_6$, $p_7$, $p_8 \rangle$.

\fussy


\sloppy

\section{Parallel Labeling} \label{sec:parallel}

After identifying a labeling order, our labeling framework will label the unlabeled pairs in this order. In Section~\ref{subsec:labeling}, we present a simple approach to achieve this goal. However, the approach only allows to publish a single pair to the crowdsourcing platform, which is unable to label the pairs simultaneously and results in long latency. To alleviate this problem, we propose a parallel labeling algorithm in Section~\ref{subsec:parallel-label-crowdsource}, which can crowdsource multiple pairs every time without increasing the total number of required crowdsourced pairs. To further improve the parallelism, we present two optimization techniques in Section~\ref{subsec:parallel-optimization}.


\subsection{Parallel Labeling Algorithm}  \label{subsec:parallel-label-crowdsource}

We first use an example to show our basic idea. Consider the labeling order $\order = \big\langle (o_1, o_2), (o_2, o_3), (o_3, o_4)\big\rangle$. The simple labeling approach will first crowdsource the first pair $(o_1, o_2)$, and cannot crowdsource the second pair until the first pair is labeled. However, for the second pair $(o_2, o_3)$, we observe that no matter which label the first pair gets, we must need to crowdsource it since the second pair $(o_2, o_3)$ cannot be deduced from the first pair $(o_1, o_2)$. For the third pair $(o_3, o_4)$, we have a similar observation that no matter which labels the first two pairs get, we must crowdsource it since the third pair $(o_3, o_4)$ cannot be deduced from the first two pairs $(o_1, o_2)$ and $(o_2, o_3)$. Therefore, all the pairs in $\order$ can be crowdsourced together instead of individually. Based on this idea, we propose a parallel labeling algorithm as shown in Figure~\ref{algo:parallel-labeling}.

\vspace{.5em}

{\noindent {\bf Algorithm Overview:}} Our parallel labeling algorithm employs an iterative strategy. In each iteration, the algorithm first identifies a set of pairs that can be crowdsourced in parallel~(Line~\ref{algo:parallel-labeling:parallel}). Then the algorithm publishes the pairs simultaneously to the crowdsourcing platform, and obtains their crowdsourced labels~(Line~\ref{algo:parallel-labeling:crowdsource-label}). After that, the algorithm utilizes the already labeled pairs to deduce subsequent unlabeled pairs~(Lines~\ref{algo:parallel-labeling:deduce-label-begin}-\ref{algo:parallel-labeling:deduce-label-end}). The algorithm repeats the iterative process until all the pairs are labeled.

\begin{figure}[tup]
\begin{algorithm}[H]

\linesnumbered \SetVline

\small
\caption{\textsc{ParallelLabeling}($\order$)}

\SetLine

\KwIn{$\order = \langle p_1, p_2, \cdots, p_n \rangle$~: a sorted list of unlabeled pairs}
 \KwOut{$\labelset = \{(p_i, \ell)~|~1 \leq i\leq n\}$: a set of labeled pairs}

\SetVline
\Begin{
    $\labelset = \{\}$\;
    \While {\emph{there is an unlabeled pair in $\order$}}{
        $\publish$ = \textsf{ParallelCrowdsourcedPairs}($\order$);\nllabel{algo:parallel-labeling:parallel}\\
        $\labelset$ $\cup$= \textsf{CrowdsourceLabels}($\publish$);\nllabel{algo:parallel-labeling:crowdsource-label}\\
        \For {\emph{each unlabeled pair $p\in\order$}}{\nllabel{algo:parallel-labeling:deduce-label-begin}
            \If {\emph{\textsf{DeducedLabel}($p$, $\labelset$)}}{
                Add $(p, \ell)$ into $\labelset$;\nllabel{algo:parallel-labeling:deduce-label-end}\\
            }
        }
    }
    \textbf{return} $\labelset$\;

}
\end{algorithm} \vspace{-1.5em}
\caption{Parallel labeling algorithm.}\label{algo:parallel-labeling} \vspace{-1em}
\end{figure}

\vspace{.5em}

A big challenge in the algorithm is to identify a set of pairs that can be crowdsourced in parallel. We know that a pair $p_i = (o, o')$ needs to be crowdsourced if and only if $p_i$ cannot be deduced from $\{p_1, p_2, \cdots, p_{i-1}\}$.

In the case that $\{p_1, p_2, \cdots, p_{i-1}\}$ are labeled, we need to check the graph built for $\{ p_1, p_2, \cdots, p_{i-1}\}$. If every path from $o$ to $o'$ contains more than one non-matching pair, then $p_i$ cannot be deduced from $\{p_1, p_2, \cdots, p_{i-1}\}$. However, some pairs in $\{p_1, p_2, \cdots, p_{i-1}\}$ may have not been labeled. In this case, we have no idea about the exact number of non-matching pairs in some paths. In order to see if every path must contain more than one non-matching pair, we compute the minimum number of non-matching pairs in each path by supposing all the unlabeled pairs are matching pairs. If the minimum number of non-matching pairs in each path is larger than one, then whatever the unlabeled pairs in $\{p_1, p_2, \cdots, p_{i-1}\}$ are labeled, the number of non-matching pairs in each path must be larger than one, thus $p_i$ cannot be deduced from $\{p_1, p_2, \cdots, p_{i-1}\}$, and needs to be crowdsourced.

Based on this idea, given a sorted list of object pairs, $\order = \langle p_1, p_2, \cdots, p_n\rangle$, where some pairs have not been labeled, to identify which pairs can be crowdsourced in parallel, we first suppose all the unlabeled pairs are matching pairs, and then for each pair $p_i$ $(1\leq i \leq n)$, we output $p_i$ as a crowdsourced pair if it cannot be deduced from $\{p_1, p_2, \cdots, p_{i-1}\}$. As discussed in Section~\ref{subsec:labeling}, \clustergraph can be utilized to efficiently decide whether $p_i$ cannot be deduced from $\{p_1, p_2, \cdots, p_{i-1}\}$. Note that to make a decision for each $p_i$ ($1\leq i \leq n$), we do not need to build the \clustergraph from scratch since the \clustergraph can be easily obtained by inserting $p_{i-1}$ into the \clustergraph built for $\{p_1, p_2, \cdots, p_{i-2}\}$.


Figure~\ref{algo:parallel-crowdsourcing-pair} shows the algorithm for identifying the pairs that can be crowdsourced in parallel in each iteration. The algorithm first initializes an empty \clustergraph, and then checks each pair $p_i$ ($1\leq i \leq n$). If $p_i$ has already been labeled, it does not need to be crowdsourced any more, thus we update the \clustergraph by inserting $p_i$, and go to the next pair $p_{i+1}$ (Lines~\ref{algo:parallel-crowdsourcing-pair:labeled-begin}-\ref{algo:parallel-crowdsourcing-pair:labeled-end}); otherwise, we check whether $p_i$ can be deduced from $\{p_1, p_2, \cdots, p_{i-1}\}$. In the \clustergraph, if $\cluster{o} \neq \cluster{o'}$ and there is no edge between $\cluster{o}$ and $\cluster{o'}$, $p_i$ cannot be deduced, and thus needs to be crowdsourced (Lines~\ref{algo:parallel-crowdsourcing-pair:crowdsource-begin}-\ref{algo:parallel-crowdsourcing-pair:crowdsource-end}). In this case, since $p_i$ is unlabeled, we suppose it is a matching pair, and insert $p_i$ into the \clustergraph, and go to the next pair $p_{i+1}$ (Line~\ref{algo:parallel-crowdsourcing-pair:insert}). After checking all the pairs, we return the obtained crowdsourced pairs.

\begin{figure}[tup] \vspace{.5em}
\decmargin{.5em}
\begin{algorithm}[H]

\linesnumbered \SetVline

\small

\caption{\textsc{ParallelCrowdsourcedPairs}($\order$, $\labelset$)}

\SetLine

\KwIn{$\order = \langle p_1, p_2, \cdots, p_n \rangle$~: a sorted list of unlabeled pairs\\
      \hspace{3.3em}$\labelset$: a set of labeled pairs}
 \KwOut{$\publish$: a set of pairs that can be crowdsourced in~parallel}

\SetVline

\Begin{
    $\publish = \{\}$;\nllabel{algo:parallel-labeling:publish-begin}\\
    Initialize an empty \clustergraph\;
    \For {\emph{$i=1$ to $n$}}{
        \If {\emph{$(p_i, \ell) \in \labelset$}}{\nllabel{algo:parallel-crowdsourcing-pair:labeled-begin}
            Insert $(p_i, \ell)$ into \clustergraph;\nllabel{algo:parallel-crowdsourcing-pair:labeled-end}\\
        }
        \Else{

            $p_i = (o, o')$\;
            \If {\emph{$\cluster{o} \neq \cluster{o'}$ \textbf{and} there is no edge between $\cluster{o}$ and $\cluster{o'}$}}{\nllabel{algo:parallel-crowdsourcing-pair:crowdsource-begin}
                Add $p_i$ into $\publish$;\nllabel{algo:parallel-crowdsourcing-pair:crowdsource-end}\\
             }
            Insert $(p_i, ``\textrm{matching}")$ into \clustergraph;\nllabel{algo:parallel-crowdsourcing-pair:insert}\\
        }
    }
    \textbf{return} $\publish$;\nllabel{algo:parallel-crowdsourcing-pair:return}\\

}
\end{algorithm} \vspace{-1em}
\incmargin{.5em}
\caption{\textsf{ParallelCrowdsourcedPairs} algorithm.}\label{algo:parallel-crowdsourcing-pair} \vspace{-1em}
\end{figure}



\begin{figure}[htbp]
\centering
  \includegraphics[scale=0.25]{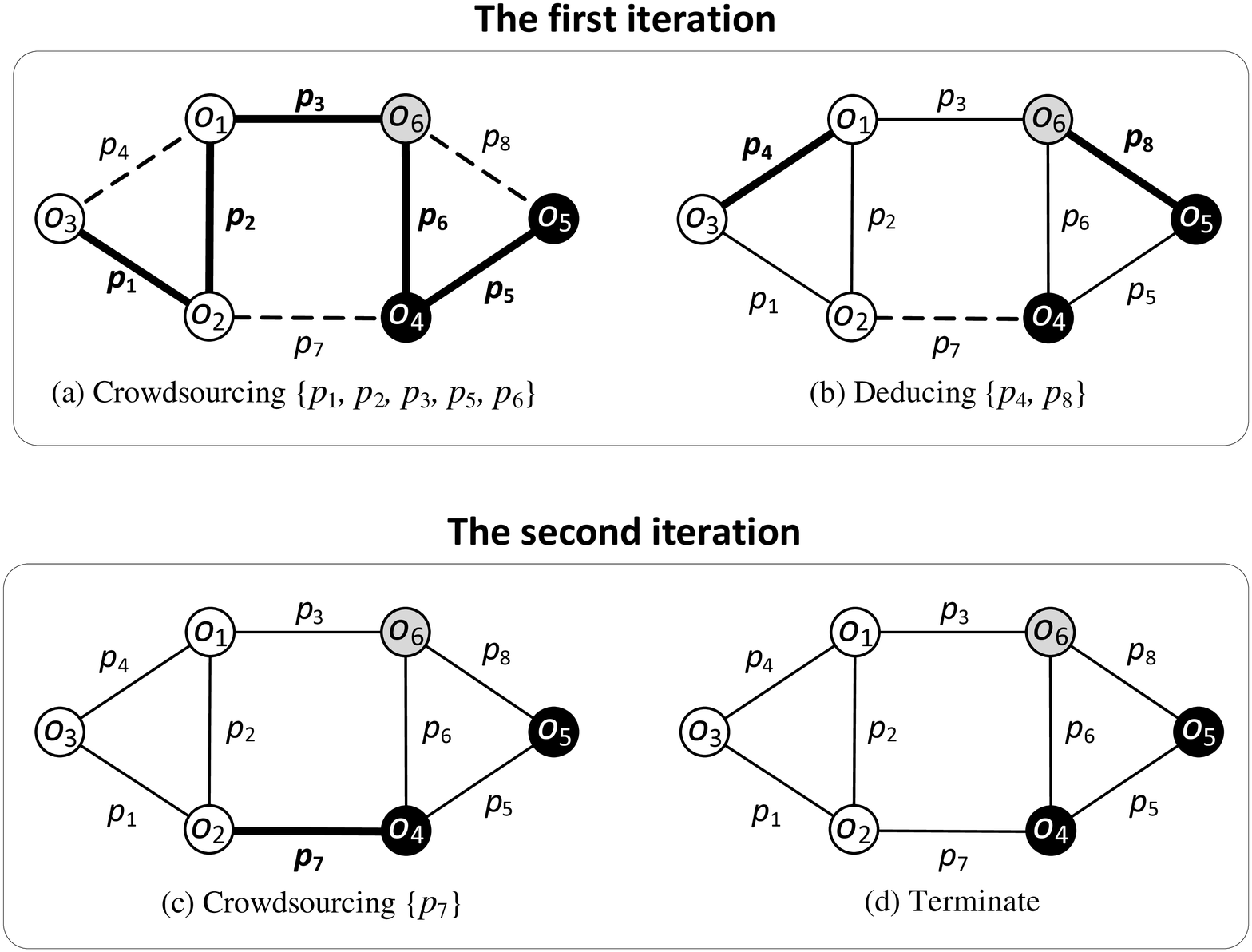}\\\vspace{-.5em}
  \caption{An illustration of parallel labeling algorithm.}\label{fig:parallel-labeling}\vspace{-1em}
\end{figure}

\begin{example}
Consider the running example in Figure~\ref{fig:running-example}. Given the labeling order $\langle p_1, p_2, \cdots, p_8 \rangle$, Figure~\ref{fig:parallel-labeling} shows how to use the parallel labeling algorithm to label the pairs (The solid edges represent labeled pairs and the dotted edges represent unlabeled pairs.).

In the first iteration, since all the given pairs are unlabeled, we first suppose $\{p_1, p_2, \cdots, p_8\}$ are matching pairs, and then identify five pairs (i.e., $p_1$, $p_2$, $p_3$, $p_5$, $p_6)$ that can be crowdsourced in parallel (the bold solid edges in Figure~\ref{fig:parallel-labeling}(a)). For example, $p_5$ is identified since it cannot be deduced from $\{p_1, p_2, p_3, p_4\}$ while $p_7$ is not identified since it can be deduced from $\{ p_1, p_2, p_3, p_4, p_5, p_6 \}$ (Note that they are supposed as matching pairs). We are able to publish $\{p_1$, $p_2$, $p_3$, $p_5$, $p_6\}$ simultaneously to the crowdsourcing platform. After obtaining their labels, based on transitive relations, we can deduce $p_4$ from $p_1$ and $p_2$, and deduce $p_8$ from $p_5$ and $p_6$ (the bold solid edges in Figure~\ref{fig:parallel-labeling}(b)). Since there still exists an unlabeled pair (i.e. $p_7$), we repeat the iteration process.

In the second iteration, since only $p_7$ is unlabeled, we first suppose $p_7$ is a matching pair, and then identify one pair (i.e., $p_7$) for crowdsourcing (the bold-solid edges in Figure~\ref{fig:parallel-labeling}(c)). We publish $p_7$ to the crowdsourcing platform. After it is labeled, we find all the pairs have been labeled, thus the algorithm is terminated, and the labeled pairs are returned.
\end{example}

\vspace{-1em}

\subsection{Optimization Techniques}  \label{subsec:parallel-optimization}

In this section, we propose two optimization techniques, instant decision and non-matching first, to further enhance our parallel labeling algorithm.

\vspace{.5em}

{\noindent \bf Instant Decision:} Recall our parallel labeling algorithm. The algorithm will first publish some pairs to the crowdsourcing platform, and after all the published pairs have been labeled, decide which pairs can be crowdsourced next. Notice that we do not need to wait until all the published pairs have been labeled to decide the next-round crowdsourced pairs. Instead when some of the published pairs are labeled, we can utilize them instantly to crowdsource the remaining pairs. For example, in Figure~\ref{fig:parallel-labeling}, we first publish $\{p_1$, $p_2$, $p_3$, $p_5$, $p_6\}$ together to the crowdsourcing platform. If $p_3$ and $p_6$ are labeled, we can deduce that $p_7 = (o_2, o_4)$ must be a crowdsourced pair, and can be published instantly instead of waiting for the other pairs. This is because, in the graph built for $\{p_1, p_2, p_3, p_4, p_5, p_6\}$, there are two paths from $o_2$ to $o_4$, i.e., $o_2\!\rightarrow\!o_1\!\rightarrow\!o_6\!\rightarrow\!o_4$ and $o_2\!\rightarrow\!o_3\!\rightarrow\!o_1\!\rightarrow\!o_6\!\rightarrow\!o_4$. Both paths contain at least two non-matching pairs (i.e., $p_3$ and $p_6$), thus $p_7$ cannot be deduced from $\{p_1$, $p_2$, $p_3$, $p_4$, $p_5$, $p_6\}$ based on transitive relations.

Based on this idea, we propose an optimization technique, called \emph{instant decision}, which will make an instant decision on which pairs can be published next whenever a single published pair (instead of all the published pairs) is labeled. Achieving this goal requires a minor change to the algorithm in Figure~\ref{algo:parallel-crowdsourcing-pair} by excluding the already published pairs from $\publish$ in Line~\ref{algo:parallel-crowdsourcing-pair:return}. By applying the optimization technique to our parallel labeling algorithm, we are able to increase the number of the available pairs in the crowdsourcing platform to enhance the effect of parallelism.

\vspace{.5em}

{\noindent \bf Non-matching First:} If we utilize the instant-decision optimization technique, when a published pair is labeled, we need to decide which pairs can be crowdsourced next. We find if the labeled pair is a matching pair, that will not lead to publishing any other pair. This is because when deciding which pairs can be crowdsourced in Figure~\ref{algo:parallel-crowdsourcing-pair}, we have assumed that all the unlabeled pairs are matching pairs. Hence, knowing an unlabeled pair is a matching pair will have no effect on the algorithm. Based on this idea, we propose an optimization technique, called \emph{non-matching first}. Consider the published pairs in the crowdsourcing platform. If we could ask the crowd workers to label the potentially non-matching pairs first, i.e., label the published pairs in the increasing order of the probability that they are a matching pair, that would increase the number of the available pairs in the crowdsourcing platform so as to enhance the effect of parallelism. It is worth noting that this order is for the published pairs in the parallel labeling algorithm, which is different from the order for labeling all pairs in Section~\ref{sec:sorting}. 



\vspace{-.5em}
\section{Experiment}
\label{sec:exp}
In this section, we evaluate our method. The goals of the experiments are to (1) examine the effectiveness of transitive relations in reducing the number of crowdsourced pairs, (2) compare the number of crowdsourced pairs required by different labeling orders, (3) validate the advantage of our parallel labeling algorithm over the non-parallel labeling algorithm, and (4) illustrate the performance of our method in a real crowdsourcing platform.

\begin{figure}[tbp] 
\centering
  \includegraphics[scale=0.7]{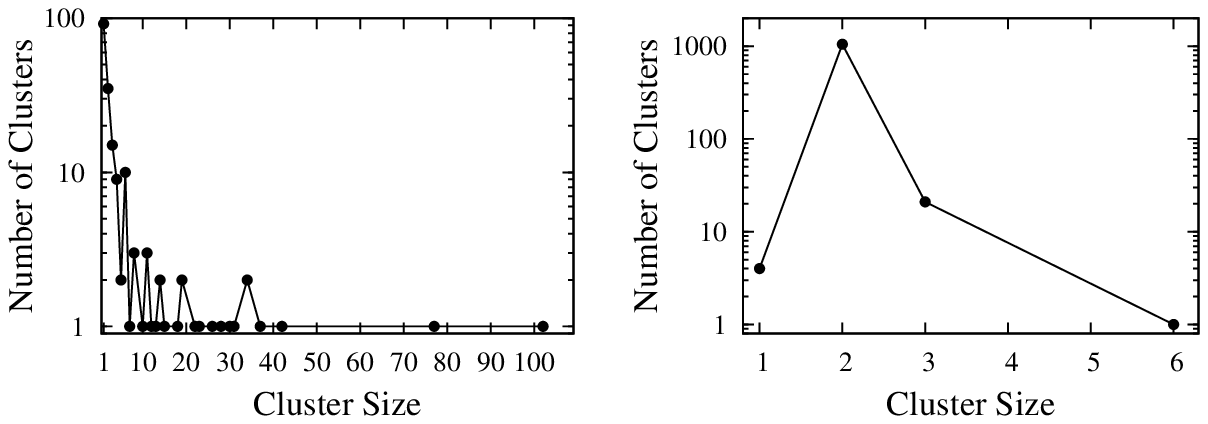}\\
  \hspace{3em} (a) \paper \hspace{9em}   (b) \product \hspace{4em} \vspace{-2em}
  \caption{Cluster-size distribution.}\label{exp:duplicate} \vspace{-1.5em}
\end{figure}

We used two public real-world datasets to evaluate our approaches which were widely adopted by prior works. (a) \paper (a.k.a \cora)\footnote{\scriptsize http://www.cs.umass.edu/$\thicksim$mccallum/data/cora-refs.tar.gz} is a dataset of research publications. Each object in the dataset is a record with five attributes, \emph{Author}, \emph{Title}, \emph{Venue}, \emph{Date} and \emph{Pages}. There are 997 distinct records, leading to $\frac{997*996}{2} = 496,506$ pairs. (b) \product (a.k.a \abtbuy)\footnote{\scriptsize http://dbs.uni-leipzig.de/file/Abt-Buy.zip} is a product dataset containing information on 1081 products from \emph{abt.com} and 1092 products from \emph{buy.com}. Each object is a product record with two attributes, \emph{name} and \emph{price}. The dataset contains a total of $1081 * 1092 = 1,180,452$ pairs.


We chose \paper and \product datasets in the experiment due to their different characteristics in the number of matching objects. To visualize the difference, we clustered the true matching objects in each dataset, and plotted the cluster-size distribution in Figure~\ref{exp:duplicate}. We see that compared to \product, \paper has far larger clusters and should thus benefit more from using transitive relations. For example, there is a cluster consisting of 102 matching objects on the \paper dataset. For such a large cluster, using transitive relations can reduce the number of crowdsourced pairs from $\frac{102*101}{2} = 5151$ to $101$. However, for smaller clusters, e.g. cluster size = 3, using transitive relations can only reduce the number of crowdsourced pairs from $\frac{3*2}{2} = 3$ to $2$.

It was found that most of the pairs in the datasets look very dissimilar, and can easily be weeded out by algorithmic methods~\cite{journals/pvldb/WangKFF12}. We followed this method to compute for each pair a likelihood that they are matching, and only asked the crowd workers to label the most likely matching pairs, i.e. those pairs whose likelihood is above a specified threshold.

\subsection{Effectiveness of Transitive Relations}

\begin{figure}[tbp] 
\centering
  \includegraphics[scale=0.7]{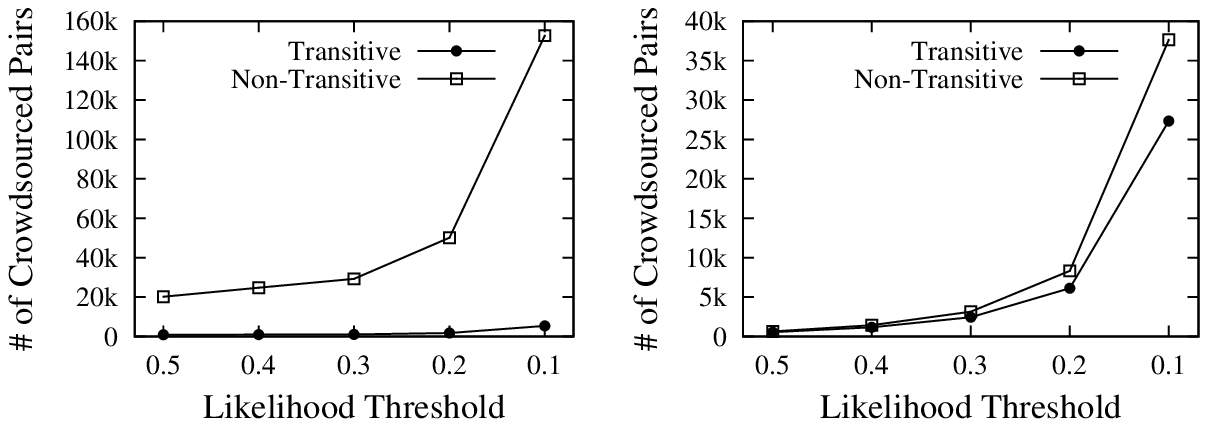}\\
  \hspace{3em} (a) \paper \hspace{9em}   (b) \product \hspace{4em} \vspace{-2em}
  \caption{Effectiveness of transitive relations.}\label{exp:transitivity} \vspace{-1em}
\end{figure}

In this section, we evaluate the effectiveness of transitive relations on the \paper and \product datasets. If the labeling method did not apply transitive relations, denoted by \nontransitive, all of the pairs need to be crowdsourced. On the contrary, if the labeling method utilized transitive relations, denoted by \transitive, many pairs can be deduced based on transitive relations, and only the remaining pairs need to be crowdsourced. We varied the likelihood threshold from 0.5 to 0.1 on both the \paper and \product datasets, and respectively used \nontransitive and \transitive to label the pairs whose likelihood is above the threshold. Figure~\ref{exp:transitivity} compares the number of crowdsourced pairs required by \nontransitive and \transitive  (with the optimal labeling order). On the \paper dataset, we can see \transitive reduced the number of crowdsourced pairs by 95\%. For example, when the likelihood threshold was 0.3, \transitive only needed to crowdsource 1065 pairs while \nontransitive had to crowdsource 29,281 pairs. On the \product dataset, even if there are not so many matching objects in the dataset (Figure~\ref{exp:duplicate}), \transitive can still save about 20\% crowdsourced pairs compared to \nontransitive. For example, when the threshold is 0.2, 6134 pairs needed to be crowdsourced by \transitive while \nontransitive required to crowdsource 8315 pairs.

\subsection{Evaluating Different Labeling Orders}\label{subsec:evaluate-order}
Having shown the benefits of transitive relations in reducing the crowdsourced pairs, we now turn to examining how different labeling orders affect the effectiveness of transitive relations. We compare the number of crowdsourced pairs required by different labeling orders in Figure~\ref{exp:labeling-order}. \optimalorder, \expectoptimalorder, \randomorder, and \worstorder respectively denote the labeling orders which label first all matching pairs then the other non-matching pairs, label the pairs in the decreasing order of likelihood, label the pairs randomly, and label first all non-matching pairs then the other matching pairs. By comparing \worstorder with \optimalorder, we can see the selection of labeling orders has a significant effect on the number of required crowdsourced pairs. For example, on the \paper dataset, if labeling the pairs whose likelihood is above 0.1 in the worst order, we needed to crowdsource 139,181 pairs, which was about 26 times more than the crowdsourced pairs required by the optimal order. By comparing \expectoptimalorder and \randomorder with \optimalorder, we can see that the \expectoptimalorder needed to crowdsource a few more pairs than the \optimalorder but the \randomorder involved much more crowdsourced pairs, which validated that our heuristic labeling order has a very good performance in practice. Unless otherwise stated, we will use the \expectoptimalorder to label the pairs in later experiments.

\begin{figure}[tbp] 
\centering
  \includegraphics[scale=0.7]{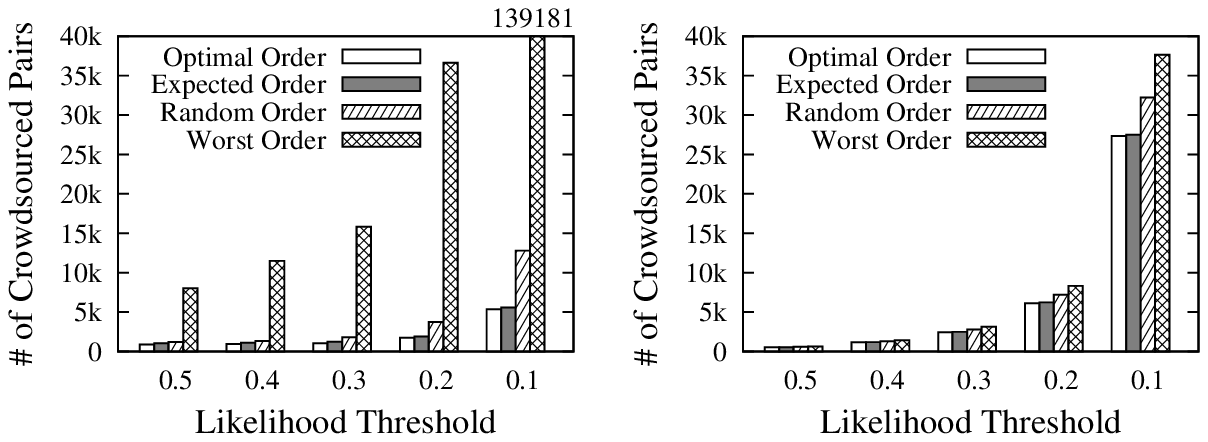}\\
  \hspace{3em} (a) \paper \hspace{9em}   (b) \product \hspace{4em}\vspace{-2em}
  \caption{The number of crowdsourced pairs required by different labeling orders.}\label{exp:labeling-order}\vspace{-1em}
\end{figure}

\subsection{Evaluating Parallel Labeling Algorithm}
In this section, we evaluate our parallel labeling algorithm as well as the corresponding optimization techniques.

We first compare the parallel labeling algorithm (referred to as \parallelno) with the non-parallel labeling algorithm (referred to as \nonparallel). We respectively used \parallelno and \nonparallel to label the pairs whose likelihood was above 0.3.  Figure~\ref{exp:parallel-03} illustrates their number of parallel pairs in each iteration. Compared to \nonparallel, \parallelno significantly reduced the total number of iterations. For example, on the \paper dataset, there were a total of 1237 crowdsourced pairs. For this, Non-Parallel required 1237 iterations, i.e., in each iteration only a single pair could be crowdsourced. But \parallelno reduced the number of iterations to 14, where in each iteration, 908, 163, 40, 32, 20, 18, 11, 9, 9, 9, 7, 6, 4, and 1 pair(s) respectively have been crowdsourced in parallel. We also evaluated \parallelno for other likelihood thresholds, and found that a better performance can be achieved for higher likelihood thresholds. For example, Figure~\ref{exp:parallel-04} shows the result for a threshold of 0.4. Comparing to the result in Figure~\ref{exp:parallel-03} (threshold=0.3), \parallelno involved fewer iterations on both datasets. This is because for a larger threshold, there were fewer number of pairs whose likelihood was above the threshold. Thus the graph built for the pairs became more sparse, and allowed to crowdsource more pairs per iteration.

\begin{figure}[tbp] 
\centering
  \includegraphics[scale=0.7]{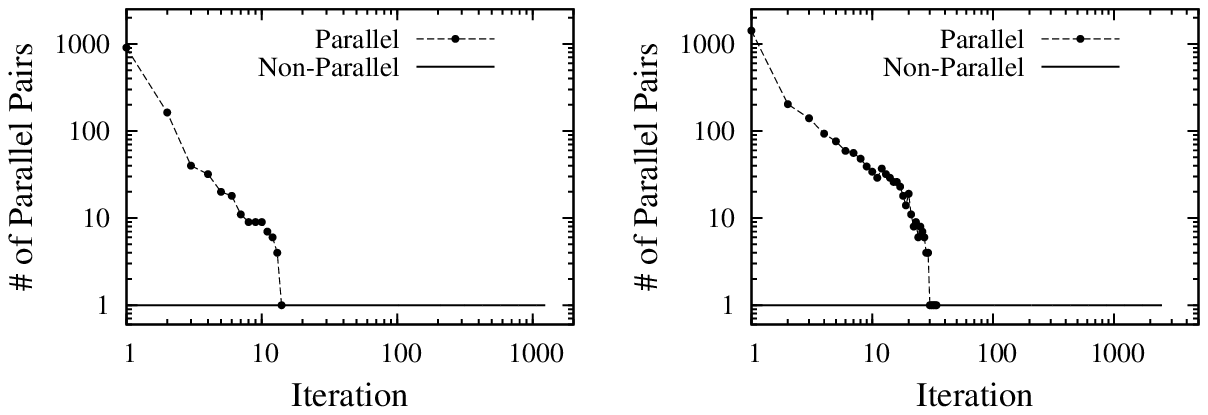}\\
  \hspace{3em} (a) \paper \hspace{9em}   (b) \product \hspace{4em}\vspace{-2em}
  \caption{Parallel v.s. non-parallel labeling algorithm (likelihood threshold = 0.3).}\label{exp:parallel-03}\vspace{1em}
    \includegraphics[scale=0.7]{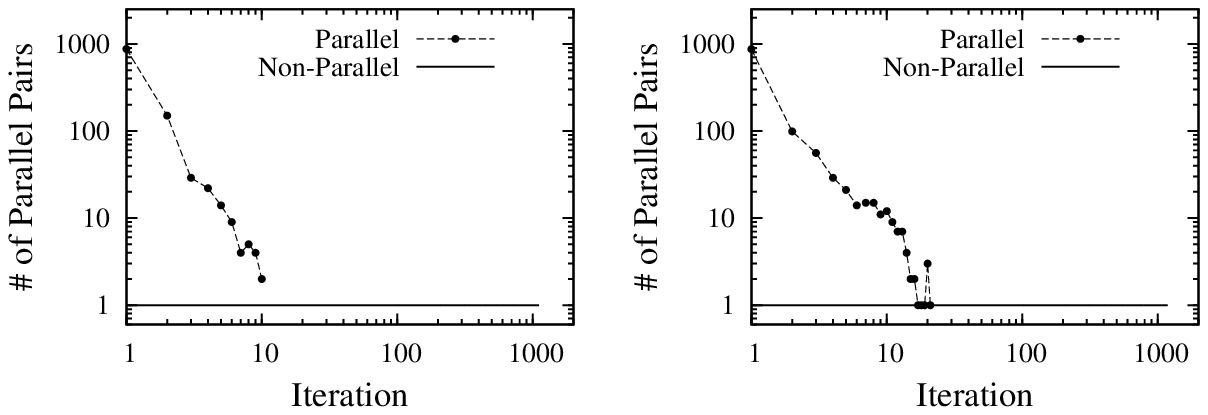}\\
  \hspace{3em} (a) \paper \hspace{9em}   (b) \product \hspace{4em}\vspace{-2em}
  \caption{Parallel v.s. non-parallel labeling algorithm (likelihood threshold = 0.4).}\label{exp:parallel-04} \vspace{-1em}
\end{figure}

Next we evaluate optimization techniques for parallel labeling algorithm. In Figure~\ref{exp:parallel-op-03}, \parallelno, \parallelid, and \parallelnf respectively denote the parallel algorithm without any optimization technique, the parallel algorithm with the instant-decision optimization technique, and the parallel algorithm with both instant-decision and non-matching-first optimization techniques. (Note that the parallel algorithm with only the non-matching-first optimization technique is the same as \parallelno.) At the beginning, all of the three algorithms published a set of pairs to the crowdsourcing platform, and then waited for the crowd workers to label them. \parallelno and \parallelid were supposed to label the pairs randomly while \parallelnf was supposed to first label the most unlikely matching pairs. When a pair was labeled, \parallelno would not publish any new pairs until all the pairs in the crowdsourcing platform had been labeled, whereas both \parallelid and \parallelnf would instantly decide which pair to publish next. Figure~\ref{exp:parallel-op-03} illustrates that the number of available pairs in the crowdsourcing platform changed with the increasing number of pairs labeled by the crowd. Unlike \parallelno, \parallelid and \parallelnf ensured that at any time, there were sufficient pairs available in the crowdsourcing platform, which kept the crowd doing our work continuously. For example, on the \product dataset, after 1420 pairs were crowdsourced, \parallelno only had one available pair in the crowdsourcing platform while \parallelid and \parallelnf respectively had 219 pairs and 281 pairs in the crowdsourcing platform. In addition, we can also see from the figure that \parallelnf lead to more available pairs than \parallelid, which validated the effectiveness of the non-matching-first optimization technique.

\subsection{Evaluating our approaches in a real crowdsourcing platform}

Finally we evaluate our approaches with AMT. We paid workers 2 cents for completing each HIT. In order to reduce the cost, we adopted a batching strategy~\cite{journals/pvldb/MarcusWKMM11,journals/pvldb/WangKFF12} by placing 20 pairs into one HIT. To control the result quality, each HIT was replicated into three assignments. That is, each pair would be labeled by three different workers. The final decision for each pair was made by majority vote.

We first compare \parallelid with \nonparallel in AMT using a threshold of 0.3. (We were unable to evaluate \parallelnf in AMT since the current AMT can only randomly assign HITs to workers.) As we only focused on the difference between their completion time, we simulated that the crowd in AMT always gave us correct labels. In this way, the two algorithms would crowdsource the same number of pairs, thus requiring the same amount of money. As discussed in Section~\ref{subsec:labeling}, the batching strategy is not applicable to \nonparallel. To make a fair comparison, \nonparallel used the same HITs as \parallelid, but published a single one per iteration. Table~\ref{exp:parallel-amt} compares their completion time. We can see \parallelid significantly improved the labeling performance over \nonparallel. For example, on the \paper dataset, if we used \nonparallel to publish 68 HITs in a non-parallel way, we had to wait for 78 hours. However, if we published them in parallel, the waiting time reduced by almost one order of magnitude.

\begin{figure}[tbp] 
\centering
  \includegraphics[scale=0.7]{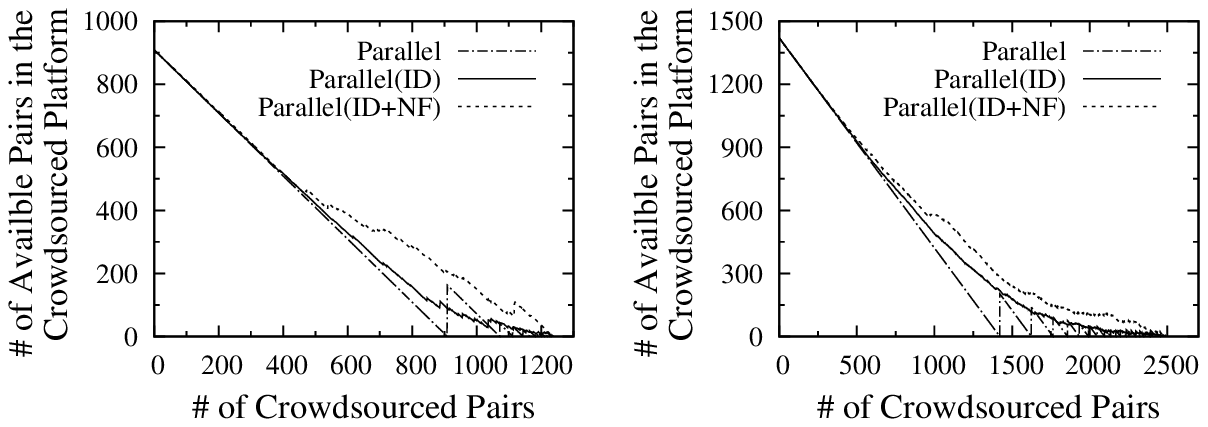}\\
  \hspace{3em} (a) \paper \hspace{9em}   (b) \product \hspace{4em}\vspace{-2em}
  \caption{Optimization techniques for the parallel labeling algorithm (likelihood threshold = 0.3).}\label{exp:parallel-op-03} \vspace{-1.5em}
\end{figure}



\begin{table}[htbp] \vspace{-.5em}
  \centering
  \caption{Comparing \parallelid  with \nonparallel in AMT (likelihood threshold = 0.3).}\label{exp:parallel-amt}\vspace{.5em}
   \begin{tabular}{|@{\,}c||c|c|c@{\,}|}
  \hline
    \bf Dataset    & \bf \# of HITs     &  \bf Non-Parallel  &  \bf Parallel(ID)   \\ \hline\hline
    \paper     & 68            & 78 hours             &  8 hours    \\ \hline
    \product   & 144            & 97 hours             &  14 hours     \\ \hline
  \end{tabular}\vspace{-.5em}
\end{table}

In order to evaluate the effectiveness of transitive relations with AMT, we respectively used \transitive and \nontransitive to label the pairs whose likelihood was above 0.3, where \transitive adopted \parallelid to label the pairs in the \expectoptimalorder, and \nontransitive simply published all the pairs simultaneously to the crowdsourcing platform. We compared \transitive with \nontransitive in terms of completion time, number of HITs, and result quality. Table~\ref{exp:transitive-amt} shows the respective results on the \paper and \product datasets. We employed \emph{Precision}, \emph{Recall}, and \emph{F-measure} to evaluate the result quality. Let \tp denote the number of correctly labeled matching pairs, \fp the number of wrongly labeled matching pairs, and \fn the number of falsely labeled non-matching pairs. Precision and recall are respectively defined as $\frac{\tp}{\tp+\fp}$ and $\frac{\tp}{\tp+\fn}$. F-measure is defined as the harmonic mean of precision and recall, i.e. $\frac{2\cdot \precision\cdot \recall}{\precision+\recall}$.

We used qualification tests to improve the result quality. A qualification test consisted of three specified pairs. Only the workers who correctly labeled the pairs were allowed to do our HITs. In Table~\ref{exp:transitive-amt}(a), to label 29,281 pairs on \paper dataset, \nontransitive published $\frac{29281}{20} = 1465$ HITs, and waited for 755 hours until all the HITs were completed, whereas \transitive can reduce the HITs by 96.5\% and the time by 95.8\% with about 5\% loss in the result quality. This experimental result indicates that for the dataset with a lot of matching objects, \transitive can save a large amount of cost and time with a little loss in result quality. The reason for the loss of quality is that some pairs' labels were falsely deduced from incorrectly labeled pairs based on transitive relations.

\begin{table}[tbp] \vspace{-.5em}
  \centering
\caption{Comparing \transitive with \nontransitive in AMT (likelihood threshold = 0.3)}\label{exp:transitive-amt}\vspace{.0em}
  (a) \paper \scriptsize
   \begin{tabular}{|@{\;}l@{\;}||@{\;}c@{\;}|@{\;}c@{\;}|c@{\;}|@{\;}c@{\;}|@{\;}c@{\;}|}
  \hline
                      & \bf \# of     &  \bf \multirow{2}{*}{Time}   &  \multicolumn{3}{@{\;}c|}{\bf  Quality}   \\ \cline{4-6}
                      &    \bf HITs            &                  &  \bf Precision   & \bf Recall & \bf F-measure \\ \hline
    \nontransitive    & 1465           & 755 hours        &    68.82\%      & 95.03\%  &  79.83\%    \\ \hline
    \transitive        & 52             & 32  hours        &    62.96\%      & 90.47\%  &  74.25\%     \\ \hline
  \end{tabular}\vspace{.5em}\\
    {\normalsize(b)  \product}
   \begin{tabular}{|@{\;}l@{\;}||@{\;}c@{\;}|@{\;}c@{\;}|c@{\;}|@{\;}c@{\;}|@{\;}c@{\;}|}
  \hline
                      & \bf \# of     &  \multirow{2}{*}{\bf Time}   &  \multicolumn{3}{@{\;}c|}{\bf Quality}   \\ \cline{4-6}
                      &  \bf  HITs            &                  &  \bf Precision   & \bf Recall & \bf F-measure \\ \hline
    \nontransitive    & 158             & 22 hours        &    95.69\%      & 68.94\%  &  80.14\%    \\ \hline
    \transitive        & 144             & 30  hours        &    94.70\%      & 68.82\%  &  79.71\%     \\ \hline
  \end{tabular}\vspace{-1.5em}
\end{table}

Next we turn to the experimental result on \product dataset. In Table~\ref{exp:transitive-amt}(b), 3154 pairs needed to be labeled, and \nontransitive published $\frac{3154}{20} = 158$ HITs, and waited for 22 hours until all the HITs were completed. Since there are not so many matching objects in the dataset, \transitive can only save about 10\% of the HITs. Due to the iterative process of publishing HITs, \transitive lead to a little longer completion time. But in terms of quality, \transitive was almost the same as \nontransitive. This experimental result indicates that for the dataset with not so many matching objects, transitive relations can help to save some money with almost no loss in result quality but may lead to longer completion time.

\vspace{-.5em}
\section{Related Work}
\label{sec:related-work}
Recently, several projects on crowd-enabled query processing system~\cite{conf/sigmod/FranklinKKRX11,journals/pvldb/MarcusWKMM11,ilprints1015} and hybrid crowd-machine data integration system~\cite{jeffery2013arnold} were proposed in the database community. To implement such systems, there are many studies in processing a variety of crowdsourced queries~\cite{journals/pvldb/MarcusWKMM11,journals/pvldb/WangKFF12,conf/www/DemartiniDC12,ilprints1047,journals/pvldb/ParameswaranSGPW11,conf/sigmod/GuoPG12,conf/sigmod/ParameswaranGPPRW12,conf/www/VenetisGHP12,conf/icde/BethTMP13}. As one of the most important queries, crowdsourced joins have been widely investigated in~\cite{journals/pvldb/MarcusWKMM11,journals/pvldb/WangKFF12,conf/www/DemartiniDC12,ilprints1047}. Marcus et al.~\cite{journals/pvldb/MarcusWKMM11} proposed a human-only technique with some batching and feature filtering optimizations for crowdsourced joins. Wang et al.~\cite{journals/pvldb/WangKFF12} developed a hybrid human-machine workflow which first utilized machine-based techniques to weed out a large number of obvious non-matching pairs, and only asked the crowd workers to label the remaining pairs. Demartini et al.~\cite{conf/www/DemartiniDC12} also employed a hybrid human-machine technique, and in addition, they developed a probabilistic framework for deriving the final join result. Whang et al.~\cite{ilprints1047} proposed a budget-based method for crowdsourced joins which assumed there was not enough money to label all the pairs, and explored how to make a good use of limited money to label a certain number of pairs. When using the crowd workers to label a set of pairs, the prior works neglected the fact that transitive relations hold among the pairs. Therefore, our work complements them by leveraging transitive relations to reduce the number of crowdsourced pairs. In a recent technical report, Gruenheid et al.~\cite{tr785} also explored how to leverage transitive relations for crowdsourced joins, but they mainly studied how to decide whether two objects refer to the same entity when crowd workers give inconsistent answers, which has a different focus than our work.


Some real applications such as entity resolution~\cite{journals/ml/BansalBC04,conf/sigmod/WhangMKTG09,conf/dmkd/MongeE97} also seek to benefit from transitive relations. Essentially, these works first label pairs using some sophisticated algorithms, and then utilize transitive relations to obtain the final result. They mainly focused on how to resolve the conflicts introduced by transitive relations rather than reduce the precious human effort. In a pay-as-you-go data integration system, Jeffrey et al.~\cite{conf/sigmod/JefferyFH08} studied the problem of minimizing the human work to achieve the best data-integration quality. But their approach aimed to identify the most uncertain pairs for verification, without considering the benefits of transitive relations.

There are also some studies on parallel crowdsourcing. Little et al.~\cite{little2010exploring} compared iterative crowdsourcing model with parallel crowdsourcing model in a variety of problem domains, and provided some advice about the selection of models. TurKit~\cite{conf/uist/LittleCGM10} is a toolkit based on the crash-and-rerun programming model which makes it easier to write parallel crowdsourcing algorithms. CrowdForge~\cite{conf/uist/KitturSKK11} is a map-reduce style framework which partitions a complex task into subtasks that can be done in parallel. These tools can help us to easily implement the parallel labeling algorithm in a real crowdsourcing platform.

\vspace{-.8em}
\section{Conclusion and Future Work}
\label{sec:conclusion}
\vspace{-.3em}

We studied the problem of leveraging transitive relations for crowdsourced joins (e.g., for entity resolution), to minimize the number of crowdsourced pair verifications.  Our approach consists of two components:
(1) The sorting component takes the pre-matched pairs from a machine-based method and determines the best order for verification. We found that the labeling order has a significant effect on the number of crowdsourced pairs. We proved that the optimal labeling order, which minimizes the number of crowdsourced pairs, has to first label all the matching pairs, and then label the other non-matching pairs. As this is impossible to achieve (we do not know the real matching pairs upfront), we proposed a heuristic labeling order that labels the pairs in the decreasing order of the probability that they are a matching pair. (2) For the labeling component, we found that a simple labeling method lead to longer latency and increased cost. We devised a novel parallel labeling algorithm to overcome these drawbacks. We have evaluated our approaches, both using simulation and with AMT, and showed, that transitive relations can lead to significant cost savings with no or little loss in result quality.

\sloppy

Various future directions exist for this work including to detect non-transitive relations, automate money/time/quality trade-offs for joins, explore other kinds of relations (e.g. one-to-one relationship) or extend to non-equality joins (i.e., general theta-joins). In this work, we showed that transitive relations can lead to significant cost savings in crowdsourced joins for entity resolution.

\fussy

\vspace{.5em}

\fussy
{\noindent  \bf Acknowledgements.} {\small This work was partly supported by the National Natural Science Foundation of China under Grant No.~61003004 and 61272090, National Grand Fundamental Research 973 Program of China under Grant No.~2011CB302206, and a project of Tsinghua University under Grant No. 20111081073, and the ``NExT Research Center'' funded by MDA, Singapore, under Grant No.~WBS:R-252-300-001-490, and NSF CISE Expeditions award CCF-1139158 and DARPA XData Award FA8750-12-2-0331, and  gifts from Amazon Web Services, Google, SAP,  Blue Goji, Cisco, Clearstory Data, Cloudera, Ericsson, Facebook, General Electric, Hortonworks, Huawei, Intel, Microsoft, NetApp, Oracle, Quanta, Samsung, Splunk, VMware and Yahoo!. 
}
\sloppy

\vspace{-.5em}

{
\bibliographystyle{abbrv}
\scriptsize
\bibliography{ref/crowdsourcing,ref/er}
}
\end{document}